\RequirePackage[l2tabu,orthodox]{nag}
\documentclass
[11pt,letterpaper]
{article}

\usepackage[dvipsnames]{xcolor}

\usepackage[notes=true,later=false,camera=false,draft=false]{dtrt}
\usepackage[utf8]{inputenc}
\usepackage{etex}
\usepackage{ stmaryrd }
\usepackage{xspace,enumerate}
\usepackage[T1]{fontenc}
\usepackage[full]{textcomp}
\usepackage[american]{babel}
\usepackage{mathtools}

\usepackage{amsthm}
\usepackage{empheq}
\usepackage{thm-restate}

   \usepackage{hyperref}
   
\hypersetup{
colorlinks=true,
urlcolor=Cerulean,
linkcolor=RoyalBlue,
citecolor=OliveGreen,
linktocpage=true,
}
\usepackage[capitalise,nameinlink]{cleveref}
\crefname{lemma}{Lemma}{Lemmas}
\crefname{fact}{Fact}{Facts}
\newcommand{\colorconstraints}{\text{Color Constraints}}
\crefname{colorconstraints}{(color constraints)}{Color Constraints}
\crefformat{colorconstraints}{#2\colorconstraints#3}
\crefname{indsetconstraints}{(indset constraints)}{IndSet Constraints}
\crefformat{indsetconstraints}{#2$\mathsf{IndSet\ Axioms}$#3}
\crefname{theorem}{Theorem}{Theorems}
\crefname{mtheorem}{Theorem}{Theorems}
\crefname{corollary}{Corollary}{Corollaries}
\crefname{claim}{Claim}{Claims}
\crefname{example}{Example}{Examples}
\crefname{algorithm}{Algorithm}{Algorithms}
\crefname{problem}{Problem}{Problems}
\crefname{definition}{Definition}{Definitions}
\usepackage{paralist}
\usepackage{turnstile}
\usepackage{mdframed}
\usepackage{tikz}
\usepackage{caption}
\DeclareCaptionType{Algorithm}
\usepackage{newfloat}
\newtheorem{theorem}{Theorem}[section]
\newtheorem{mtheorem}{Theorem}%
\newtheorem*{theorem*}{Theorem}

\newtheorem*{proposition*}{Proposition}
\newtheorem{lemma}[theorem]{Lemma}
\newtheorem*{lemma*}{Lemma}
\newtheorem{corollary}[theorem]{Corollary}
\newtheorem*{conjecture*}{Conjecture}
\newtheorem{fact}[theorem]{Fact}
\newtheorem*{fact*}{Fact}

\newtheorem*{hypothesis*}{Hypothesis}

\theoremstyle{definition}
\newtheorem{definition}[theorem]{Definition}
\newtheorem*{definition*}{Definition}

\newtheorem{algorithm}[theorem]{Algorithm}

\newtheorem{model}[theorem]{Model}

\theoremstyle{remark}
\newtheorem{claim}[theorem]{Claim}
\newtheorem*{claim*}{Claim}

\newtheorem*{remark*}{Remark}

\newtheorem*{observation*}{Observation}

\usepackage[
letterpaper,
top=1.2in,
bottom=1.2in,
left=1in,
right=1in]{geometry}

\usepackage{mathpazo}
\usepackage{textcomp} %
\usepackage[varg,bigdelims]{newpxmath}
\usepackage{bm} %
\let\mathbb\varmathbb
\usepackage{microtype}
\DeclareMathOperator*{\argmax}{argmax}

\usepackage{thm-restate}

\allowdisplaybreaks
\newcommand{\FormatAuthor}[3]{
\begin{tabular}{c}
#1 \\ {\small\texttt{#2}} \\ {\small #3}
\end{tabular}
}

\newcommand{\R}{{\mathbb R}}

\newcommand{\norm}[1]{\lVert #1 \rVert}

\newcommand{\abs}[1]{\lvert #1 \rvert}
\newcommand{\Abs}[1]{\left \lvert #1 \right \rvert}

\newcommand{\seq}{\subseteq}
\newcommand{\eps}{\varepsilon}

\newcommand{\E}{{\mathbb E}}

\newcommand{\ip}[1]{\langle #1 \rangle}

\newcommand{\Bits}{\{0,1\}}

\newcommand{\Fits}{\{-1,1\}}

\DeclareMathOperator{\Var}{Var}
\DeclareMathOperator{\Cov}{Cov}

\newcommand{\cH}{\mathcal H}

\newcommand{\poly}{\mathrm{poly}}
\newcommand{\val}{\mathrm{val}}

\newcommand{\mper}{\,.}
\newcommand{\mcom}{\,,}

\DeclareMathOperator{\polylog}{{polylog}}

\newcommand{\cQ}{\mathcal Q}

\newcommand{\cA}{\mathcal A}

\newcommand{\cK}{\mathcal{K}}

\newcommand{\defeq}{\coloneqq}
\DeclareMathOperator{\err}{err}
\newcommand{\literalneg}{b}
\newcommand{\pE}{\tilde{\E}}
\newcommand{\SoS}{\mathsf{SoS}}
\DeclareMathOperator*{\supp}{\mathrm{supp}}

\begin{document}

\title{Certifiable Near-Optimality: A Simple Framework for Unifying Search and Refutation for (Semi)random CSPs}
\author{
\begin{tabular}[h!]{ccc}
      \FormatAuthor{Prashanti Anderson\thanks{This material is based on work done while visiting TTIC and Northwestern as a summer intern.}}{paanders@mit.edu}{MIT}
     \FormatAuthor{Peter Manohar\thanks{This material is based upon work supported by the National Science Foundation
under Grant No.\ DMS-2424441.}}{pmanohar@ias.edu}{The Institute for Advanced Study}
       \FormatAuthor{Jeff Xu}{jeffxusichao@ttic.edu}{TTIC}
\end{tabular}
} %
\date{\today}

\maketitle

\vspace{-0.5cm}
\begin{abstract}
A classical problem in average-case complexity is the study of \emph{random} constraint satisfaction problems (CSPs). Random CSPs are traditionally studied in two different settings: refutation, where the instances are uniformly random and thus unsatisfiable with high probability, and search, where the instances are drawn from a planted model so that they are satisfiable. While there is no formal relationship between the refutation and search variants of random CSPs, known algorithms are strikingly similar with near-identical computational thresholds.

In this work, we establish a formal relationship between the known algorithms for refutation and search by showing that in either case they achieve a stronger guarantee: they output an assignment $x$ along with a certificate $\pi$ that the fraction of constraints satisfied by $x$ is within some small $\eps$ of the optimal assignment. We call this guarantee \emph{certifiable $\eps$-optimality}.

As an application, we design new algorithms for a model of \emph{semirandom} CSPs where the instance hypergraph (or scopes) is random, but the literal negation patterns are adversarially chosen and may depend on the hypergraph. For such CSPs, we give a family of algorithms that output certifiably $\eps$-optimal solutions. 

We additionally study such semirandom CSPs in the ``strong contamination model'', where an adversary is allowed to corrupt an $O(\delta)$-fraction of constraints after seeing the initial CSP. For such CSPs, we give an algorithm to output a certifiably $O(\delta)$-optimal solution.
\end{abstract}
\vspace*{\fill}
{\small\paragraph{AI Usage Statement.}
 The main content of this paper, e.g., the central question, high-level proof strategy, and technical content, was initially formulated solely by the human authors. AI models were then used to improve the dependence on $k$ in~\cref{lem:low-gc-k}, decreasing the size of the set being conditioned on from $k^2/\eps$ to $k/\eps$, and also to assist in proofreading and typesetting proofs. The authors verified the correctness and originality of all content, including references, by rewriting the AI's output and manually incorporating it into the paper. This statement is intended solely to describe the role of AI in the research process and should not be read as an endorsement of such systems or of the companies that produce them.}

\clearpage
 \microtypesetup{protrusion=false}
  \tableofcontents{}
  \microtypesetup{protrusion=true}

\clearpage

\pagestyle{plain}
\setcounter{page}{1}

\section{Introduction}
Over the last two decades, the study of random constraint satisfaction problems (CSPs) has become a major line of inquiry in the area of average-case analysis~\cite{GoerdtL03,Coja-OghlanGL07,AllenOW15,FeldmanPV15,RaghavendraRS17,AbascalGK21,GuruswamiKM22,GuruswamiHKM23,BasuHLM26}.  Motivated by the fundamental importance of CSPs (such as 3-SAT) in complexity theory, as well as the numerous strong worst-case hardness results, 
this line of research designs many interesting algorithms for CSPs despite the presence of worst-case hardness.

\parhead{Refutation vs.\ search.} Broadly, the study of random CSPs can be divided into two\footnote{One can also study the decision problem, where one is given a CSP drawn either from the refutation distribution or the search distribution, and the goal is to distinguish between the two. Note that both refutation and search algorithms solve the distinguishing task.} different algorithmic tasks, \emph{refutation} and \emph{search}, depending on whether the input distribution outputs an instance that is unsatisfiable with high probability or satisfiable with high probability. Refutation algorithms, studied in~\cite{GoerdtL03,Coja-OghlanGL07,AllenOW15,RaghavendraRS17,AbascalGK21,GuruswamiKM22}, ask the algorithm to provide a certificate of unsatisfiability. This is the natural task to consider when the input distribution is defined via the uniform distribution over instances, which is unsatisfiable with high probability once the number of constraints $m$ is, say, $m \geq O(n \log n)$, where $n$ is the number of variables. Search algorithms, studied in~\cite{FeldmanPV15,GuruswamiHKM23,BasuHLM26}, ask the algorithm to output an assignment that satisfies (nearly) all the constraints when the input distribution samples ``random-like'' instance\footnote{This is somewhat tricky to formalize, see \cref{def:randomplantedcsp}.} that is satisfiable with probability $1$.

The long study of random CSPs has yielded a (conjectured) near-complete understanding of the problem, both in the refutation setting and in the search setting. At a high level, the results are as follows: for a choice of a parameter $\ell \defeq \ell(n)$, there is a constraint threshold $m_{k,\ell} = O_k(1) \left(\frac{n}{\ell}\right)^{\frac{k}{2} - 1} \ell$ such that if $m \geq m_{k,\ell} \polylog(n)$, then there are refutation/search algorithms running in time $n^{O(\ell)}$, and if $m \leq m_{k,\ell}/\polylog(n)$, there are lower bounds in restricted computational models (such as the sum-of-squares hierarchy)~\cite{BarakCK15,KothariMOW17}. We note that the algorithms of~\cite{GoerdtL03,Coja-OghlanGL07,AllenOW15,RaghavendraRS17,AbascalGK21,GuruswamiKM22} are captured by the sum-of-squares lower bound of~\cite{KothariMOW17} (formally, a refutation lower bound), which is typically interpreted as providing reasonable evidence that they are near-optimal.

The fact that the runtime vs.\ number of constraints trade-off of $n^{O(\ell)}$ time vs.\ $m_{k,\ell} = O_k(1) \left(\frac{n}{\ell}\right)^{\frac{k}{2} - 1} \ell$ constraints is essentially the same for both refutation and search suggests that there should be some formal relationship between these two problems. However, there is no formal relationship like a reduction between the two, and a similar phenomenon holds for other average-case problems such as planted clique. A key barrier to relating refutation and search is the fact that refutation really only makes sense when the input distribution is supported primarily on unsatisfiable instances, whereas search only makes sense when the input distribution is supported primarily on satisfiable instances.

As the main conceptual contribution of this paper, we make the following observation. The algorithms for either refutation or search achieve a stronger guarantee: given an input CSP $\Psi$, they output both an assignment $x$ along with a certificate $\pi$ that certifies that the fraction of satisfied by $x$ within some small $\eps$ of the fraction satisfied by an optimal assignment $x^*$. We call this guarantee \emph{certifiable near-optimality}, which we define formally below. We note that, unlike the target goal in refutation or in search, certifiable near-optimality is a sensible definition regardless of whether the input distribution is supported primarily on unsatisfiable instances (refutation) or satisfiable ones (search).

\begin{restatable}[$k$-ary Boolean CSP]{definition}{defbooleancsp}
A CSP instance $\Psi$ with a $k$-ary predicate $P \colon \Fits^k \to \Bits$ is a set of $m$ constraints on variables $x_1,\dots,x_n$ of the form $P(\literalneg_{C,k} x_{i_1}, \literalneg_{C,k} x_{i_2}, \ldots, \literalneg_{C,k} x_{i_k}) = 1$, where $C = (i_1, \ldots, i_k) \in [n]^k$ ranges over a collection $\cH$ of \emph{scopes} (or clause structure) of $k$-tuples of $n$ variables and $\literalneg_C \in \Fits^k$ are ``literal negations'', one for each $C$ in $\cH$. We additionally allow $\cH$ to be a multiset, i.e., that multiple clauses can contain the same ordered tuple of variables.

We let $\Psi(x)$ denote the fraction of constraints satisfied by an assignment $x \in \Fits^n$, and we define the \emph{value} of $\Psi$, $\val(\Psi)$, to be $\max_{x \in \Fits^n} \Psi(x)$.
\end{restatable}

\begin{definition}[Certifiable $\eps$-optimality]
\label{def:certifiableoptimality}
We say that a $T(n)$-time algorithm $\cA$, when given a CSP $\Psi$ as input, outputs a \emph{certifiably $\eps$-optimal solution} if it outputs a pair $(x, \pi)$ where $x \in \Fits^n$ and $\pi$ is a certificate, checkable in time $\poly(T(n))$, that certifies that $\Psi(x) \geq \val(\Psi) - \eps$.
\end{definition}
In \cref{sec:certifiableoptimality}, we shall show that prior algorithms, notably~\cite{AllenOW15,FeldmanPV15,RaghavendraRS17,AbascalGK21,GuruswamiKM22,GuruswamiHKM23,BasuHLM26,ChanDorsiXu26} all achieve \cref{def:certifiableoptimality}, with only some minor caveats. In fact, we will show that by varying the ``noise parameter'' in the planted CSP distribution, one can design a single algorithm that outputs certifiably near-optimal solutions and ``captures'' the formal guarantees of both refutation and search algorithms for random CSPs while interpolating between the two.

\parhead{CSPs with random hypergraphs.} As a second contribution, we apply our framework to a certain model of semirandom CSPs. A semirandom CSP is a CSP drawn from a distribution with a hybrid of worst-case and average-case (i.e., random) components. The study of semirandom models, pioneered by~\cite{BlumS95,FeigeK01,Feige07}, was motivated by the concern that algorithms for random CSPs are typically very brittle, and break down completely under mild perturbations to the random input: for example, the injection of a $o(m)$-fraction of clauses into an otherwise random instance. Semirandom models are thus an approach to bridge the gap between worst-case analysis and average-case analysis, thereby designing algorithms that are robust to certain adversarial changes in the input distribution. 

In this work, we consider a semirandom model where the \emph{hypergraph} $\cH$ of the instance is random and the literal negations are worst-case. In fact, we consider the input model defined below, which even allows for an adversary to choose literal negations that are \emph{dependent} on the random hypergraph $\cH$. This input model is complementary to the semirandom model of~\cite{Feige07} that has been studied extensively in other works~\cite{AbascalGK21,GuruswamiKM22,HsiehKM23,GuruswamiHKM23}, in which the hypergraph is worst-case and the randomness is present only in the literal negations.\footnote{The study of semirandom CSPs has so far been primarily focused on one model, the semirandom model of~\cite{Feige07}. However, this is but one model of semirandom CSPs; there are many other natural input distributions for CSPs that have a hybrid of worst-case and average-case structure. In fact, for the well-studied planted clique problem, there are \emph{three} different semirandom models~\cite{FeigeK00,BuhaiKS23,BlasiokBKS24}, two of which are incomparable~\cite{FeigeK00,BlasiokBKS24}, and one that is a generalization of the other two~\cite{BuhaiKS23}.}

\begin{restatable}[CSPs with random hypergraphs]{model}{csprandomhypergraph}
\label{model:csprandomhypergraph}
A \emph{random hypergraph} $k$-CSP with $n$ variables, $m$ constraints, and predicate $P \colon \Fits^k \to \Bits$ is sampled by \begin{inparaenum}[(1)] \item first sampling a random $k$-uniform hypergraph $\cH$ with $m$ hyperedges, and then \item adversarially choosing literal negations for each clause (which may depend on $\cH$)\end{inparaenum}.
\end{restatable}
We note that a conceptual difficulty in designing algorithms for \cref{model:csprandomhypergraph} is that the adversarial literal negations determine whether the instance is satisfiable or not, and so it does not fit nicely into either the refutation framework or the search framework. Thus, the fact that \cref{def:certifiableoptimality} is agnostic to the value $\val(\Psi)$ of the input makes it a very natural guarantee to use. This should be compared to the guarantee of, say, a refutation algorithm, which only makes sense when $\val(\Psi)$ is bounded away from $1$.

Our second contribution gives algorithms for CSPs from \cref{model:csprandomhypergraph} at essentially the same runtime vs.\ number of constraints trade-off as achieved for random CSPs (both refutation and search~\cite{RaghavendraRS17,BasuHLM26}) or the semirandom CSP model of~\cite{Feige07} (refutation~\cite{GuruswamiKM22,HsiehKM23}).

\begin{mtheorem}
\label{mthm:main}
There is a randomized algorithm $\cA$ that takes as input a ``runtime/accuracy'' parameter $\ell$, and a $k$-CSP instance $\Psi$ with $n$ variables and $m$ constraints, and in $n^{O_k(\ell)}$-time outputs a real number $\alpha \in [0,1]$ and an assignment $\hat{x} \in \Bits^n$ with the following guarantee:
\begin{enumerate}[(1)]
\item For every instance $\Psi$, $\val(\Psi) \leq \alpha$ with probability $1$ over the randomness of $\cA$;
\item If $m \geq \omega(m_{k,\ell})$, where $m_{k,\ell} \defeq \left(\frac{n}{\ell} \right)^{\frac{k}{2}} \ell$ and $\Psi$ is drawn from \cref{model:csprandomhypergraph}, then with high probability over (the hypergraph of) $\Psi$, it holds that $\Psi(\hat{x}) \geq \alpha - 1/\sqrt{\ell} - o(1) \geq \val(\Psi) - 1/\sqrt{\ell} - o(1)$ with high probability over the randomness of $\cA$. In particular, if $\ell = \omega(1)$, then $\Psi(\hat{x}) \geq \val(\Psi) - o(1)$.
\end{enumerate}
\end{mtheorem}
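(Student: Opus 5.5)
We plan to run the degree-$O_k(\ell)$ sum-of-squares relaxation of $\max_x \Psi(x)$, output its value as $\alpha$, and round the pseudo-expectation by global correlation reduction (conditioning). Concretely, we solve for a degree-$D = O_k(\ell)$ pseudo-expectation $\pE$ over $x \in \Fits^n$ maximizing $\pE[\Psi(x)]$, where $\Psi(x) = \frac{1}{m}\sum_{C \in \cH} P(\literalneg_C \circ x_C)$ is written as a degree-$k$ multilinear polynomial. We set $\alpha \defeq \pE[\Psi(x)]$, up to the additive numerical slack. Item (1) holds deterministically because the relaxation is a relaxation: the point mass on an optimal $x^*$ is a feasible pseudo-expectation, so $\val(\Psi) \le \alpha$.

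For item (2), we expand $P$ in its Fourier basis, $P(z) = \sum_{S \seq [k]} \hat P(S) \prod_{i\in S} z_i$. This gives
\[
\Psi(x) = \sum_{S \seq [k]} \hat P(S)\, \frac{1}{m}\sum_{C\in\cH} \Bigl(\prod_{i \in S} \literalneg_{C,i}\Bigr) x_{C|_S},
\]
so for each $S$ we get a degree-$|S|$ polynomial with adversarial signs on a random hypergraph. The key structural step is a ``certifiable decoupling'' claim. We would show that, with high probability over $\cH$ when $m \ge \omega(m_{k,\ell})$, for every sign pattern and every pseudo-expectation $\pE'$ of degree $O_k(\ell)$ that is a conditioning of $\pE$ on at most $k\ell$ variables, the quantity $\pE'[\Psi(x)]$ is within $o(1)$ of the ``independent rounding value''
\[
\frac{1}{m}\sum_C \E_{z\sim \mu_C}[P(\literalneg_C\circ z)],
\]
where $\mu_C$ is the product of the pseudo-marginals $\pE'[x_i]$ on the scope $C$, up to a global-correlation error term. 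The tool for this is the Kikuchi matrix spectral method of~\cite{GuruswamiKM22}. Since the signs are adversarial, we cannot use randomness of the signs. Instead, for each $S$ and each pair of variable-slots we would bound, via Kikuchi matrices at level $\ell$ built from $\cH$ alone, the difference
\[
\frac1m\sum_C \literalneg'_C\, \pE'\Bigl[\prod_{i\in S} x_{C_i}\Bigr] - \frac1m \sum_C \literalneg'_C \prod_{i \in S}\pE'[x_{C_i}]
\]
by the average pseudo-covariance $\E_{i,j}\abs{\pE'[x_ix_j]-\pE'[x_i]\pE'[x_j]}$ plus $o(1)$. This bound must hold uniformly over $\literalneg' \in \Fits^m$. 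Taking an absolute value on each clause reduces the problem to bounding $\sum_C \abs{\text{pseudo-covariance along } C}$, and that in turn reduces to the spectral norm of the \emph{unsigned} Kikuchi adjacency matrix of the random hypergraph minus its expectation. That norm is small precisely when $m \gg (n/\ell)^{k/2}\ell$. We expect the threshold to be $(n/\ell)^{k/2}\ell$ rather than $(n/\ell)^{k/2-1}\ell$ because the signs give no cancellation.

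Next we would apply the global correlation reduction (\cref{lem:low-gc-k}). Conditioning on a random set of $O(k\ell)$ variables, sampled from $\pE$'s local distributions, drives the average pairwise pseudo-covariance (generalized to $k$-wise total correlation) below $O(1/\sqrt{\ell})$ in expectation. Conditioning preserves $\E[\pE'[\Psi]] = \alpha$. We then round each variable independently with $\Pr[\hat x_i = 1] = (1+\pE'[x_i])/2$. By the decoupling claim, $\E[\Psi(\hat x)] \ge \alpha - O(1/\sqrt\ell) - o(1)$. Since $\Psi(\hat x)$ is bounded and concentrates, because each variable appears in few clauses for a random $\cH$, repeating $\poly(n)$ times and keeping the best assignment gives the high-probability statement. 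The total runtime is $n^{O_k(\ell)}$.

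We expect the main obstacle to be the uniform-over-signs decoupling claim. The difficulty is converting the error for $|S|$-wise correlations into a quantity that conditioning on only $O(k\ell)$ variables controls, while the signs are arbitrary and chosen after seeing $\cH$. Our first approach would be induction on $|S|$, peeling one coordinate at a time. At each step we would use Cauchy–Schwarz on the Kikuchi matrix, with degree-$2\ell$ SoS certifying the quadratic form bound $\abs{y^\top A y} \le \norm{A}\,\norm{y}^2$ for the tensored vector $y = x^{\otimes \ell}$. Each peeling step would contribute one pseudo-covariance term, which is what forces conditioning on $O(k/\eps)$ variables.
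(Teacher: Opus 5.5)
Your outer framework matches the paper's: the SoS value is $\alpha$, conditioning goes via \cref{lem:low-gc-k}, variables are rounded independently, and the hypergraph needs an \emph{unsigned} spectral property at the $(n/\ell)^{k/2}\ell$ threshold, which \cref{lem:randomhypergraphexpansion} packages as certifiable expansion. However, the step you flag as the main obstacle is a genuine gap, and your sketch does not close it. After taking absolute values clause by clause, you must compare $\frac1m\sum_{C\in\cH}\abs{\pE'[x_{C|_S}]-\mu_{C|_S}}$ with its complete-hypergraph counterpart. That quantity is not linear in the pseudo-moments of $\pE'$. So a degree-$2\ell$ certificate of $\abs{y^\top Ay}\le\norm{A}\norm{y}^2$ with $y=x^{\otimes\ell}$, evaluated under $\pE'$, tells you nothing about it; it only controls $\pE'$ of a fixed polynomial such as $\frac1m\sum_{C\in\cH}x_{C|_S}$. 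Applying the Kikuchi bound before discarding the signs would instead require the \emph{signed} Kikuchi matrix to have small norm, and adversarial signs can destroy that. Peeling inside $\cH$ does not remove the problem either. Squaring a covariance produces terms like $\pE'[x_{C_j}x_{C_{>j}}]\,\pE'[x_{C_j}]\,\pE'[x_{C_{>j}}]$, which depend on the slots of $C$ and not only on its variables. They are therefore not moments of any single pseudo-expectation over $\Fits^n$, and the certificate still does not apply.

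The missing idea is the transfer step in \cref{lem:expected-rounding-err}. First remove the signs by Cauchy--Schwarz over clauses, using $(\prod_{i\in S}b_{C,i})^2=1$. This leaves the unsigned quantity $\frac1m\sum_{C\in\cH}(\pE'[x_{C|_S}]-\mu_{C|_S})^2$. Then expand the square and use \cref{claim:product-pe}. The three terms $\pE'[x_{C|_S}]^2$, $\pE'[x_{C|_S}]\mu_{C|_S}$ and $\mu_{C|_S}^2$ are the values on the monomial $z_{C|_S}$ of $\pE'\odot\pE'$, $\pE'\odot\mu$ and $\mu\odot\mu$. Each of these is a pseudo-expectation over a single vector $z\in\Fits^n$. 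Certifiable expansion is an SoS inequality in one Boolean vector, so it applies to each product and replaces every $\cH$-average by the complete-hypergraph average up to $O(\lambda)$; the $\abs{S}=1$ terms vanish. Peeling is then needed only on the complete hypergraph, inside \cref{lem:low-gc-k}. Jensen bounds the expected error by $\norm{\hat P}_1\sqrt{4\lambda+\eps}$. Because the certificate depends on $\cH$ alone, uniformity over adversarial signs and over all conditionings comes for free.

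A minor point: the high-probability claim needs no concentration. Since $\Psi(\hat x)\le\val(\Psi)\le\alpha$, the quantity $\alpha-\Psi(\hat x)$ is nonnegative, so Markov plus repetition suffices. The paper instead derandomizes by enumerating conditionings and using conditional expectations.
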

\cref{mthm:main} thus shows that if the hypergraph of a CSP is random, then there is an algorithm that both finds a nearly-optimal assignment $\hat{x}$ and is additionally able to \emph{certify} that $\hat{x}$ is nearly optimal, achieving \cref{def:certifiableoptimality}; the real number $\alpha$ is a certificate that $\val(\Psi) \leq \alpha$, and the assignment $\hat{x}$ satisfies $\Psi(\hat{x}) \geq \alpha - o(1)$. For example, this implies that, regardless if one samples the literal negations from the uniform distribution (the case of the refutation~\cite{RaghavendraRS17}) or a planting distribution (\cref{def:randomplantedcsp}, the case of search~\cite{BasuHLM26}), the algorithm in \cref{mthm:main} nonetheless still succeeds.

It is tempting to argue that \cref{mthm:main} implies that the hardness of a worst-case CSP really depends on the hypergraph $\cH$, and not the literal negations. However, the algorithms of~\cite{AbascalGK21,GuruswamiKM22,HsiehKM23,GuruswamiHKM23} have similar guarantees in the complementary case where the hypergraph $\cH$ is worst-case and the literal negations are random. From this, one can be tempted to conclude the exact opposite, that the hypergraph $\cH$ does not matter, and the hardness of a worst-case CSP really depends on the literal negations. Combining these two perspectives, the correct conclusion is that the hardness of a worst-case CSP comes from the \emph{collusion} between the hypergraph and literal negations of the CSP; choosing one randomly makes the CSP substantially easier.

\medskip

To prove \cref{mthm:main}, we give a stronger algorithm that succeeds for any $k$-CSP with a \emph{certifiably expanding} hypergraph, a definition that we introduce in this work. Intuitively, a certifiably expanding hypergraph is one that satisfies a ``spectral expansion''-style condition that has a Sum-of-Squares certificate. Note that the definition does not impose a constraint on the number of hyperedges $m$. The second component in the proof of \cref{mthm:main} is thus the observation (implicitly in \cite{BasuHLM26,ChanDorsiXu26}, see \cref{lem:randomhypergraphexpansion}) that a random $k$-uniform hypergraph with a sufficient number of hyperedges is certifiably expanding. Below, we define certifiably expanding hypergraphs.
\begin{restatable}[Certifiable expansion]{definition}{certexpansion}
\label{def:certifiableexpansion}
A $k$-uniform hypergraph $\cH$ with $m$ hyperedges is $(d,\lambda)$-certifiably expanding if for every $S \subseteq [k]$ with $\vert S \vert \geq 2$, there is a degree-$d$ Sum-of-Squares certificate of the following inequality from the constraints $\{x_i^2 = 1 : i \in [n]\}$:
\begin{equation*}
\Abs{\frac{1}{m} \sum_{C = (i_1, \dots, i_k) \in \cH} \prod_{j \in S} x_{i_j}  - \left(\frac{1}{n} \ip{x, 1^n}\right)^{\abs{S}}} \leq \lambda
\end{equation*}
\end{restatable}

\begin{mtheorem}[Informal \cref{thm:hypergraph-gcr-main}]
\label{mthm:main2}
There is a randomized algorithm $\cA$ that takes as input a ``runtime/accuracy'' parameter $\ell$, and a $k$-CSP instance $\Psi$ with $n$ variables and $m$ constraints, and in $n^{O_k(\ell)}$-time outputs a real number $\alpha \in [0,1]$ and an assignment $\hat{x} \in \Bits^n$ with the following guarantee:
\begin{enumerate}[(1)]
\item For every instance $\Psi$, $\val(\Psi) \leq \alpha$ with probability $1$ over the randomness of $\cA$;
\item If $\cH$ is a $(O(\ell), \lambda)$-certifiably expanding hypergraph, it holds that $\Psi(\hat{x}) \geq \alpha - 1/\sqrt{\ell} -\lambda \geq \val(\Psi) - 1/\sqrt{\ell} - \lambda$ with high probability over the randomness of $\cA$.
\end{enumerate}
\end{mtheorem}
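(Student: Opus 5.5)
The plan is to solve a degree-$O_k(\ell)$ Sum-of-Squares relaxation of $\max_x \Psi(x)$, take its optimum as $\alpha$, and round by global correlation reduction, i.e.\ conditioning on a small random set of variables. Write $\Psi(x) = \frac{1}{m}\sum_{C\in\cH} P(b_C \circ x_C)$ and expand $P$ in the Fourier basis, so $\Psi(x) = \sum_{S \subseteq [k]} \hat P(S) \frac{1}{m}\sum_{C} b_{C,S}\prod_{j\in S} x_{i_j}$. The algorithm computes a pseudo-expectation $\pE$ of degree $d = O_k(\ell)$ over $\{x_i^2=1\}$ maximizing $\pE[\Psi(x)]$, and sets $\alpha := \pE[\Psi(x)]$. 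Item (1) is then immediate, since the point mass on an optimal assignment is feasible; the certificate is the dual SoS proof, or just $\pE$ itself.

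For item (2) I would first apply global correlation reduction, which I take as an earlier lemma of the paper (the conditioning lemma \cref{lem:low-gc-k}). Sample a random set $T$ of $O(k\ell)$ variables and a value $\sigma \sim \pE$ restricted to $T$, and condition to get $\pE_\sigma$. The lemma should guarantee that, on average over these choices, $\E_{i_1,\dots,i_k}\bigl|\pE_\sigma[x_{i_1}\cdots x_{i_k}] - \prod_j \pE_\sigma[x_{i_j}]\bigr|$ is at most $O(1/\sqrt{\ell})$ when the tuples $(i_1,\dots,i_k)$ are drawn uniformly from $[n]^k$, and the same bound should hold for every subset $S$. Conditioning preserves $\pE[\Psi]$ in expectation, so $\E_\sigma \pE_\sigma[\Psi] = \alpha$.

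Next I round independently, taking $\hat{x}_i = \pm 1$ with mean $\mu_i = \pE_\sigma[x_i]$. Then $\E[\Psi(\hat{x})] = \sum_S \hat{P}(S)\frac{1}{m}\sum_C b_{C,S}\prod_{j\in S}\mu_{i_j}$. I need to compare this with $\pE_\sigma[\Psi]$ term by term. The key is to transfer the low-correlation bound from uniform tuples to the hyperedges of $\cH$, and this is where certifiable expansion enters.

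This transfer is the main obstacle, and there are two difficulties.
\begin{itemize}
\item The literal signs $b_{C,S}$ are adversarial, so expansion cannot be applied to $\sum_C b_C\prod x$ directly.
\item The quantity to control is the difference $\pE_\sigma[\prod_{j \in S} x_{i_j}] - \prod_{j\in S}\mu_{i_j}$, summed over edges. This is not a polynomial evaluated under a single pseudo-distribution.
\end{itemize}
To handle both, I would bound the signed sum by the absolute sum. I would then write the absolute correlation deviation as a pseudo-expectation over two independent copies, $x$ and $x'$, of $\pE_\sigma$, using a product pseudo-distribution. For $|S|=2$, for instance, $\pE[x_ix_j] - \mu_i\mu_j = \frac{1}{2}\pE[(x_i - x_i')(x_j-x_j')]$, and its absolute value is removed by inserting signs $y_i \in \{\pm1\}$ chosen as extra variables. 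The expansion inequality is applied to polynomials in $x_i y_i$-type products, which again satisfy $z_i^2=1$. Since the inequality in \cref{def:certifiableexpansion} has a degree-$d$ SoS proof, it holds under any degree-$d$ pseudo-distribution. Applying it with each $z$ in place of $x$ shows that edge-averages of these products are within $\lambda$ of the corresponding global averages $(\frac{1}{n}\ip{z,1})^{|S|}$. Those global averages are in turn controlled by the uniform-tuple correlation bound, because global averages over $[n]^{|S|}$ factor. Summing over $S$ with $|\hat{P}(S)|$ weights gives $\E[\Psi(\hat{x})] \ge \alpha - O_k(1/\sqrt{\ell}) - O_k(\lambda)$. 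I expect some care will be needed to absorb the $2^k$ Fourier weight into the stated constants, possibly by rescaling $\ell$.

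Finally I would boost from an expectation bound to high probability by repeating the sampling and rounding $\poly(n)$ times and keeping the best $\hat{x}$. Since $\Psi(\hat{x}) \le 1$ always, Markov's inequality applied to $1-\Psi(\hat x)$ gives a constant success probability per trial. Independent rounding also concentrates once degrees are bounded, but the repetition argument suffices.
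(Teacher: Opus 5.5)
Your architecture (SoS relaxation, conditioning via \cref{lem:low-gc-k}, independent rounding, transfer from hyperedges to the complete hypergraph via expansion) is the paper's. However, the transfer step, which you correctly flag as the crux, has a genuine gap. Write $\Delta_C \defeq \pE_\sigma[x_{C\vert_S}] - \mu_{C\vert_S}$. Passing from $\frac{1}{m}\sum_C b_{C,S}\Delta_C$ to $\frac{1}{m}\sum_C\abs{\Delta_C}$ does not get rid of the adversarial signs. Indeed, $\abs{\Delta_C} = \tau_C\Delta_C$ for a sign $\tau_C$ that depends on the hyperedge $C$, and such a per-edge pattern generally cannot be written as $\prod_{j\in S}y_{i_j}$ for vertex signs $y\in\{\pm1\}^n$. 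Already for $\abs{S}=2$, patterns of the form $y_{i_1}y_{i_2}$ multiply to $+1$ around every cycle, while $\tau$ need not. This is the same obstruction that $b_{C,S}$ posed. Certifiable expansion controls only \emph{unsigned} edge averages of $\pm1$-monomials under a pseudo-expectation. A vertex sign flip $x\mapsto x\odot y$ is harmless, but per-edge signs are not. Restricting to $\tau_C=\prod_j y_{i_j}$ only gives a lower bound on $\frac{1}{m}\sum_C\abs{\Delta_C}$, not the upper bound you need.

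The missing idea is to square rather than take absolute values. By Cauchy--Schwarz,
\[
\Abs{\frac{1}{m}\sum_{C\in\cH} b_{C,S}\Delta_C} \leq \Bigl(\frac{1}{m}\sum_{C\in\cH}\Delta_C^2\Bigr)^{1/2},
\]
which removes the signs because $b_{C,S}^2=1$. Now expand $\Delta_C^2 = \pE_\sigma[x_{C\vert_S}]^2 - 2\pE_\sigma[x_{C\vert_S}]\mu_{C\vert_S} + \mu_{C\vert_S}^2$. Each term is the $C\vert_S$-moment of one of the entrywise products $\pE_\sigma\odot\pE_\sigma$, $\pE_\sigma\odot\mu$, $\mu\odot\mu$, and these are pseudo-expectations over $\{\pm1\}^n$ of the same degree (\cref{claim:product-pe}). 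This is your $z_i=x_iy_i$ device, but with $y$ an \emph{independent copy} drawn from $\pE_\sigma$ or $\mu$, not a chosen sign. Expansion moves each of the three unsigned edge averages to the complete hypergraph at cost $\lambda$ each, for $4\lambda$ in total counting the factor $2$ on the cross term. The complete-hypergraph average of $\Delta_C^2$ is then exactly the squared factorization error controlled by \cref{lem:low-gc-k}. Jensen over the conditioning gives expected loss $\sum_S\abs{\hat P(S)}\sqrt{4\lambda+\eps}$. So the dependence is $O(\sqrt{\lambda})$, not linear in $\lambda$ as you claim. This matches the formal \cref{thm:hypergraph-gcr-main}; the informal ``$-\lambda$'' should be read after reparametrizing $\lambda$.

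Two smaller points. First, Markov on $1-\Psi(\hat x)$ gives success probability only $\Omega(\eta)$ per trial when the expected loss is $\eta$, not a constant, though repetition still suffices. Second, \cref{lem:low-gc-k} only guarantees \emph{some} $r\leq k/\eps$, so you must try every $r$. The paper instead enumerates all $(T,z)$ and derandomizes by conditional expectations.
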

We discuss the details of \cref{def:certifiableexpansion} in \cref{sec:certifiableexpansion}. We show that random hypergraphs with sufficiently many hyperedges, two-sided rank-one splittable hypergraphs, and spectrally expanding graphs are certifiably expanding. In this way, \cref{mthm:main2} can be viewed as a generalization, to $k$-CSPs for $k \geq 3$, of the result that $2$-CSPs on spectrally expanding graphs are easy~\cite{BarakRS11}.

The proof of \cref{mthm:main2} is fairly simple given \cref{def:certifiableexpansion} and the \emph{global correlation rounding} framework of~\cite{BarakRS11} that has also featured in several recent works~\cite{AlevJT19,OrsiT23,ChanDorsiXu26}. Thus, the main contribution of this work is to isolate certifiable expansion as the property that lets this framework handle the hypergraphs considered here.

The work of~\cite{OrsiT23} also studies $k$-CSPs drawn from \cref{model:csprandomhypergraph}, and \cite[Theorem 1.4]{OrsiT23} is very similar to \cref{mthm:main} for the ``polynomial-time case'', where the hypergraph $\cH$ has at least $m \geq n^{k/2}$ hyperedges, i.e., $\ell = O(1)$. As they observe, the natural flattening of the hypergraph gives rise to a spectral expander in the corresponding density regime, allowing one to invoke the analysis for $2$-CSPs from~\cite{BarakRS11}. In contrast, in the sparser regime $m \ll n^{k/2}$, corresponding to $\ell=\omega(1)$, there is no analogous direct correspondence between spectral expansion of a natural flattening and the higher-order correlation structure required for random hypergraphs. Consequently, their analysis does not apply to this regime in a black-box manner.

\parhead{CSPs in the ``strong contamination model''.} Finally, we introduce another semirandom model for $k$-CSPs inspired by the strong contamination model studied in robust statistics. In this model, we start from a CSP from \cref{model:csprandomhypergraph}, but we additionally allow an adversary to replace (after seeing the CSP) an arbitrary $\delta$-fraction of constraints with adversarial constraints (see \cref{model:cspcontaminationmodel}). As a final result, we give an algorithm (\cref{thm:algcontaminationmodel}) that outputs certifiably $O(\delta)$-optimal solutions for such CSPs. The proof of the algorithm uses \cref{mthm:main2} along with some additional properties of the algorithm that are inherited from global correlation rounding.

\medskip

The remainder of the paper is organized as follows. First, we introduce preliminary notation and definitions in \cref{sec:prelims}. Then, in \cref{sec:certifiableoptimality}, we discuss \cref{def:certifiableoptimality} and explain how this strengthened notion is in fact achieved by prior work. In \cref{sec:csp-randomhypergraph}, we prove \cref{mthm:main,mthm:main2}. In \cref{sec:certifiableexpansion}, we discuss \cref{def:certifiableexpansion} and give examples of hypergraphs satisfying the definition. Finally, in \cref{sec:semirandommodels}, we introduce the ``strong contamination model'' for CSPs and give an algorithm for such CSPs.

\section{Preliminaries}
\label{sec:prelims}
For positive integers $n$, $k$ and $\ell$, we define $m_{k,\ell} \defeq (n/\ell)^{k/2} \cdot \ell = n \cdot (n/\ell)^{k/2 - 1}$.

For any $x \in \R^n$ and any $S \subseteq [n]$, define the monomial $x_S$ to be $x_S \defeq \prod_{i \in S} x_i$. We define $x_S$ similarly if $S$ is a multiset.

We define a $k$-uniform hypergraph $\cH$ to be a collection of \emph{tuples} of size $k$. We also allow our hypergraphs to have repeated hyperedges, i.e., they can be multisets.

\subsection{Sum-of-Squares background}
\label{sec:sos}

We recall some basic facts about SoS (see \cite{BarakS16,FlemingKP19} for further details). Define $\R[x_1, \ldots, x_n]_{\leq t}$ to be the set of polynomials in $\R[x_1, \ldots, x_n]$ of degree $\leq t$.
\begin{definition}[Pseudo-expectations over the hypercube]
\label{sosaxioms}
    For any $d\geq 2$, a degree $d$ pseudo-expectation $\pE$ over $\Fits^n$ is a linear functional $\pE: \R[x_1, \ldots, x_n]_{\leq d}\to\R$ satisfying the following properties:
    \begin{enumerate}
        \item (Normalization) $\pE[1] = 1$, 
        \item (Booleanity) $\pE[fx_i^2] = \pE[f]$ for all $i\in[n], f\in\R[x_1, \ldots, x_n]_{\leq d - 2}$, 
        \item (Positivity) $\pE[f^2]\geq 0$ for all $f\in\R[x_1, \ldots, x_n]_{\leq d/2}$.
    \end{enumerate}
    Finally, denote by $\SoS_d$ $(\SoS_{\geq d})$ the set of all degree $d$ ($\geq d$) pseudo-expectations over $\Fits^n$.
\end{definition}
We will make a slight abuse of terminology and use the phrase ``pseudo-expectation'' to mean a pseudo-expectation over $\Fits^n$. Note that for any degree-$t$ multlinear polynomial $f \colon \Fits^n \to \R$, it holds that $\max_{x \in \Fits^n} f(x) \leq \sup_{\pE\in\SoS_d}\pE[f]$ for all $t \leq d\leq n$.

Given a degree-$t$ polynomial $f$, the SoS algorithm can compute all moments up to degree $\leq d$ of some pseudo-expectation in $\sup_{\pE\in\SoS_d}\pE[f]$ in $n^{O(d)}$ time.
\begin{fact}[SoS Algorithm (Corollary 3.40 in \cite{FlemingKP19})]
\label{sosfkp}
    Let $f = f(x_1, \ldots, x_n)$ be a polynomial of degree $t$ with rational coefficients such that each coefficient has $\poly(n)$ bit complexity. Then for any $d\geq t$, there exists an algorithm that, on input $f$ and $d$, runs in time $n^{O(d)}$ and outputs $\{\alpha_S\}_{S\in\binom{[n]}{\leq d}}$, where $|\alpha_S - \pE_\mu[x_S]|\leq 4^{-n}$ for all $S\in\binom{[n]}{\leq d}$, and $\pE_\mu \in \arg\max_{\pE\in\SoS_d}\pE[f]$. In particular in $n^{O(d)}$ time one can compute $\alpha:= \pE_\mu[f]$ such that $\alpha$ satisfies $\alpha \geq \beta - 2^{-n}$, where $\beta:= \max_{x\in\Fits^n}f(x)$. 
\end{fact}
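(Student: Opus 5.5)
The statement just before this point is \cref{sosfkp}, the SoS algorithm (Corollary 3.40 in \cite{FlemingKP19}). Its proof reduces degree-$d$ pseudo-expectation optimization to a semidefinite program of size $n^{O(d)}$ and solves it with the ellipsoid method.

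\textbf{Step 1: write the SDP.} Using Booleanity, restrict to multilinear monomials. A degree-$d$ pseudo-expectation is then determined by the vector $(y_S)_{S \in \binom{[n]}{\leq d}}$, where $y_S = \pE[x_S]$ and $y_\emptyset = 1$. Positivity is equivalent to positive semidefiniteness of the moment matrix $M$, indexed by $\binom{[n]}{\leq d/2}$, with $M_{S,T} = y_{S \triangle T}$. The objective $\pE[f]$ is a linear function of $y$ after multilinearizing $f$. Hence
\begin{equation*}
\max_{\pE \in \SoS_d} \pE[f]
\end{equation*}
is an SDP with $n^{O(d)}$ variables and a PSD block of dimension $n^{O(d)}$.

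\textbf{Step 2: bound and round the feasible region.} Every feasible $y$ satisfies $|y_S| \leq 1$: each $2\times 2$ principal minor of $M$ built from rows $\emptyset$ and $S$ is PSD. So the feasible set lies in a ball of radius $n^{O(d)}$.

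The feasible set has empty interior in the $y$-coordinates, because of the equality structure, and it may lack an inner ball. The standard remedy is to relax PSD to $M \succeq -\eta I$ with $\eta = 2^{-\poly(n)}$. The point $y = 0$, $y_\emptyset = 1$ (the uniform distribution, which gives $M = I$) supplies a ball of radius $\Omega(1)$ around it inside the feasible region. Ellipsoid then runs in time $\poly(n^{d}, \log(1/\eta))$ and returns a near-optimal near-feasible $y$.

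\textbf{Step 3: repair feasibility and bound the value loss.} Convert the near-feasible point into a true pseudo-expectation by mixing with the uniform distribution:
\begin{equation*}
y' = (1-\gamma)\, y + \gamma\, e_\emptyset .
\end{equation*}
This makes $M \succeq 0$ exactly once $\gamma \geq \eta/(1+\eta)$. It changes each moment by at most $2\gamma$, and the objective by at most $\gamma \cdot n^{O(t)} 2^{\poly(n)}$, using that the coefficients of $f$ have $\poly(n)$ bits. Choosing $\eta$ small enough, say $4^{-n}$ divided by these factors, yields the stated accuracy $4^{-n}$ in the moments and value within $2^{-n}$ of the optimum.

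The lower bound $\alpha \geq \beta - 2^{-n}$ then follows because the point mass at a maximizer $x^*$ of $f$ is a feasible pseudo-expectation with value $\beta$.

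\textbf{Main obstacle.} The delicate part is the bit-complexity and strict-feasibility bookkeeping in Steps 2 and 3. As O'Donnell and Raghavendra–Weitz observed, SoS outputs can in general have exponential bit size, and the claimed bound relies on Booleanity plus the explicit interior point; this is why \cite{FlemingKP19} phrases the result for hypercube constraints. I would cite that analysis directly rather than redo it, since this is a stated Fact.
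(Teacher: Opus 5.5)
\paragraph{The gap: you get a near-optimal pseudo-expectation, not closeness to an exact maximizer.}
As written, Steps 1--3 prove a weaker statement than the one quoted. What you obtain is an exact degree-$d$ pseudo-expectation $y'$ with $\pE_{y'}[f]\geq\max_{\pE\in\SoS_d}\pE[f]-2^{-n}$, and the ellipsoid output $y$ satisfies $\abs{y_S-y'_S}\leq 2\gamma$. The Fact instead says the output is $4^{-n}$-close, coordinatewise, to an \emph{exact} maximizer $\pE_\mu\in\arg\max_{\pE\in\SoS_d}\pE[f]$. Nothing in your argument locates such a maximizer. The sentence ``choosing $\eta$ small enough yields the stated accuracy $4^{-n}$ in the moments'' only bounds the distance from $y$ to $y'$, and $y'$ is merely near-optimal.

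Near-optimality in value does not give comparable closeness to the optimal face of a spectrahedron. Take $d=2$ and $f=x_1x_2$.
\begin{itemize}
\item The optimal face is $\{y_{12}=1,\ y_1=y_2\}$. At $y_{12}=1$ the vector $e_{\{1\}}-e_{\{2\}}$ has zero quadratic form, so it lies in $\ker M$. The $\emptyset$-entry of $M(e_{\{1\}}-e_{\{2\}})$ is $y_1-y_2$, which must therefore vanish.
\item The point $y_1=-y_2=\sqrt{\eps/2}$, $y_{12}=1-\eps$ has a PSD moment matrix. On the span of $e_\emptyset$ and $(e_{\{1\}}-e_{\{2\}})/\sqrt2$, $M$ has diagonal entries $1$ and $\eps$ and off-diagonal entry $\sqrt{\eps}$, so determinant $0$. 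The vector $e_{\{1\}}+e_{\{2\}}$ is an eigenvector with eigenvalue $2-\eps$.
\end{itemize}
This is an $\eps$-optimal pseudo-expectation whose moments are $\sqrt{\eps/2}$ away from every optimizer, and your analysis does not stop ellipsoid from returning such a point. With $\eta\approx 4^{-n}/\poly$, the claimed $4^{-n}$ moment accuracy already fails here. For general SDPs only H\"older-type error bounds are available, with an exponent that can degrade with the singularity degree of the optimal face. So converting value accuracy into moment accuracy relative to an exact optimizer is a genuinely missing step, not a matter of shrinking $\eta$.

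\paragraph{What your sketch does prove, and smaller points.}
Your sketch does establish that the output is within $O(4^{-n})$ of a genuine pseudo-expectation whose value is within $2^{-n}$ of the degree-$d$ SoS optimum. By your point-mass argument, that value is at least $\beta-2^{-n}$. This is all the paper uses: \cref{thm:hypergraph-gcr-main} accepts an arbitrary pseudo-expectation, and the certificate $\alpha$ in \cref{mthm:main} and \cref{thm:algcontaminationmodel} only needs $\pE[\Psi]\geq\val(\Psi)$ up to an additive $2^{-n}$. So either state and prove that near-optimal version, or do what the paper does and import the exact formulation from \cite{FlemingKP19}, without claiming your sketch derives it.

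Two smaller points.
\begin{itemize}
\item The bound $\abs{y_S}\leq1$ for $\abs{S}>d/2$ needs the $2\times2$ minor on rows $T,U$ with $T\triangle U=S$, not on rows $\emptyset,S$. Such rows exist only when $\abs{S}\leq2\lfloor d/2\rfloor$. So for odd $d$ the top-degree multilinear moments are unconstrained, and the maximum is $+\infty$ when $t=d$ and $f$ has a nonzero top-degree multilinear coefficient; take $d$ even.
\item The relaxation $M\succeq-\eta I$ is not needed to get an inner ball. In the multilinear coordinates with $y_\emptyset=1$, the exact feasible set already contains a neighborhood of the uniform point, where $M=I$. The relaxation-and-mixing step is needed only because ellipsoid returns approximately feasible points.
\end{itemize}
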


\subsection{Constraint satisfaction problems}
We recall some basic facts about constraint satisfaction problems and planting distributions for random planted CSPs.
\defbooleancsp

\begin{definition}[Instance polynomial]
\label{def:instancepoly}
Given a $k$-CSP instance $\Psi$ with predicate $P$, we define the instance polynomial $\Psi$ as follows
\begin{flalign*}
\Psi(x) \defeq \frac{1}{m} \sum_{C = (i_1, \dots, i_k) \in \cH} P(b_{C,1} x_{i_1}, \dots, b_{C,k} x_{i_k}) \mper
\end{flalign*}
Note that $\Psi$ is a degree $\leq k$ polynomial, and that $\Psi(x)$ is the fraction of constraints satisfied by an assignment $x$.
\end{definition}

A random planted CSP is defined as follows.
\begin{definition}[Random planted $k$-ary Boolean CSPs]
\label{def:randomplantedcsp}
Let $P \colon \Fits^k \to \Bits$ be a predicate. We say that a distribution $\cQ$ over $\Fits^k$ is a \emph{planting distribution for $P$} if $\Pr_{y \sim \cQ}[P(y) = 1] = 1$.

We say that an instance $\Psi$ with predicate $P$ is a \emph{random planted instance} with \emph{planting distribution} $\cQ$ if it is sampled from a distribution $\Psi(x^*, m, \cQ)$ where
\begin{enumerate}[(1)]
    \item The planted assignment $x^* \in \Fits^n$ is arbitrary;
    \item the scopes $\cH \subseteq [n]^k$ is a multiset of size $m$ sampled by choosing $m$ elements of $[n]^k$ uniformly at random with replacement;
    \item for each $C = (i_1, \dots, i_k)\in \cH$, the literal negations $\literalneg_C$ are sampled by $\literalneg_C \sim \cQ(\literalneg(C) \odot (x^*_{i_1}, \dots, x^*_{i_k}))$, where ``$\odot$'' denotes the element-wise product of two vectors.
    That is, $\Pr[\literalneg_C = y] = \cQ(y \odot (x^*_{i_1}, \dots, x^*_{i_k}))$ for each $y \in \Fits^k$.
    Then, add the constraint 
    \[P(\literalneg_{C,1} x_{i_1}, \literalneg_{C,2} x_{i_2}, \ldots, \literalneg_{C,k} x_{i_k}) = 1\] 
    to the instance $\Psi$.
\end{enumerate}
Because $\cQ$ is supported only on satisfying assignments to $P$, it follows that if $\Psi \sim \Psi(x^*, m, \cQ)$, then $x^*$ satisfies $\Psi$ with probability $1$.
\end{definition}

We recall the notion of \emph{distribution complexity}, as defined in~\cite{FeldmanPV15}.
\begin{definition}[Distribution Complexity]
\label{def:distcomp}
    Let $P:\Fits^k\to\{0, 1\}$ be a predicate, and let $\cQ:\Fits^k\to[0, 1]$ be a planting distribution supported on $P^{-1}(1)$. The \emph{distribution complexity} of $\cQ$ is defined to be the smallest integer $r\geq 1$ for which there exists a set $S\seq[k]$ of size $r$ such that $|\widehat{\cQ}(S)|\geq 4^{-k}$, where for any set $T\seq[k]$, the Fourier coefficient $\widehat{\cQ}(T)$ is defined as $\widehat{\cQ}(T):= 2^{-k}\sum_{y\in\Fits^k}\cQ(y)\prod_{j\in T}y_j$. In case $\max_{\emptyset\neq S\seq[k]}|\widehat{\cQ}(S)| < 4^{-k}$, set the distribution complexity of $\cQ$ to be $r = 1$.
\end{definition}
Note that if $P$ is a non-trivial predicate, i.e.\ $P^{-1}(1)\subsetneq\Fits^k$, then $\supp(\cQ)\neq \Fits^k$ since $\supp(\cQ)\seq P^{-1}(1)$, and thus one can show that $\max_{\emptyset\neq S\seq[k]}|\widehat{\cQ}(S)| \geq 4^{-k}$.
\begin{fact}[Proposition 3.12 in~\cite{BasuHLM26}]
    Let $\cQ$ be a probability distribution on $\Fits^k$ such that $\supp(\cQ)\neq\Fits^k$. Then $\max_{\emptyset \neq S\subseteq[k]  }|\widehat{\cQ}(S)| > 4^{-k}$.
\end{fact}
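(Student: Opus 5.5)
We argue by contrapositive: if every nonempty Fourier coefficient of $\cQ$ is small, then $\cQ$ is close to uniform and hence has full support. Write $\cQ$ via its Fourier expansion
\[
\cQ(y) = \sum_{T \subseteq [k]} \widehat{\cQ}(T) \prod_{j \in T} y_j ,
\]
which is valid with the normalization $\widehat{\cQ}(T) = 2^{-k}\sum_y \cQ(y) \prod_{j\in T} y_j$. Since $\cQ$ is a probability distribution, $\widehat{\cQ}(\emptyset) = 2^{-k}\sum_y \cQ(y) = 2^{-k}$.

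Now suppose, for contradiction, that $|\widehat{\cQ}(S)| \leq 4^{-k}$ for every nonempty $S \subseteq [k]$. For any $y \in \Fits^k$, the triangle inequality gives
\[
\cQ(y) \geq 2^{-k} - \sum_{\emptyset \neq T \subseteq [k]} |\widehat{\cQ}(T)| \geq 2^{-k} - (2^k - 1)4^{-k} = 4^{-k} > 0 .
\]
So every $y$ has positive probability, i.e.\ $\supp(\cQ) = \Fits^k$. This contradicts the hypothesis $\supp(\cQ) \neq \Fits^k$. Hence some nonempty $S$ satisfies $|\widehat{\cQ}(S)| > 4^{-k}$, which is the strict inequality claimed.

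The only point requiring care is the bookkeeping: checking the inversion formula under this normalization, and counting the $2^k - 1$ nonempty sets so that the bound is strict. Indeed, $2^{-k} - (2^k-1)4^{-k} = 4^{-k}$, which is strictly positive, so assuming the non-strict bound $\leq 4^{-k}$ on every coefficient already forces full support. No further obstacle is expected.
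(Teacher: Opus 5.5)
Your proof is correct. The inversion formula holds with this normalization, $\widehat{\cQ}(\emptyset)=2^{-k}$, and $2^{-k}-(2^k-1)4^{-k}=4^{-k}>0$, so the contrapositive goes through with the strict inequality.

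Your route differs from the one the paper indicates. The paper only says that the fact follows from Plancherel's theorem. Applying Plancherel to $\cQ-2^{-k}$, whose Fourier coefficients are $\widehat{\cQ}(S)$ for $S\neq\emptyset$ and $0$ for $S=\emptyset$, a point $y_0\notin\supp(\cQ)$ contributes $(2^{-k})^2$ to the left-hand side. This gives $\sum_{S\neq\emptyset}\widehat{\cQ}(S)^2\geq 2^{-k}\cdot 4^{-k}=8^{-k}$. Averaging over the $2^k-1$ nonempty sets then gives $\max_{S\neq\emptyset}\abs{\widehat{\cQ}(S)}\geq (8^{-k}/(2^k-1))^{1/2}>(8^{-k}/2^k)^{1/2}=4^{-k}$.

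You instead use pointwise inversion plus the triangle inequality, an $\ell_1$ bound rather than an $\ell_2$ identity. This is at least as elementary and slightly sharper. Evaluating the inversion formula at $y_0$ gives $\sum_{S\neq\emptyset}\abs{\widehat{\cQ}(S)}\geq 2^{-k}$, hence $\max\geq 2^{-k}/(2^k-1)$. In contrapositive form it also yields a quantitative statement: $\min_y\cQ(y)\geq 4^{-k}$ whenever all nonempty coefficients are at most $4^{-k}$.

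The Plancherel route instead controls the $\ell_2$ distance of $\cQ$ from uniform. That is the natural quantity when one only knows $\cQ$ is far from uniform, rather than that it misses a point. It also lets the paper handle this fact and the bound $\sum_S\abs{\hat{P}(S)}\leq 2^{k/2}$ with the same tool.
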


We also note that the following holds.
\begin{fact}
For any function $P \colon \Fits^k \to [0,1]$, it holds that $\sum_{S \subseteq [k]} \abs{\hat{P}(S)} \leq 2^{k/2}$.
\end{fact}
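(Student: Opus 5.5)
We prove the bound $\sum_{S \subseteq [k]} \abs{\hat{P}(S)} \leq 2^{k/2}$ by Cauchy--Schwarz followed by Parseval. We use the normalization $\hat{P}(S) = \E_{y \sim \Fits^k}[P(y) \prod_{j \in S} y_j]$, i.e., Fourier coefficients with respect to the uniform measure on $\Fits^k$.

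First, apply Cauchy--Schwarz over the $2^k$ subsets $S \subseteq [k]$:
\[
\sum_{S \subseteq [k]} \abs{\hat{P}(S)} \leq \sqrt{2^k} \cdot \Bigl(\sum_{S \subseteq [k]} \hat{P}(S)^2\Bigr)^{1/2} \mper
\]

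Second, the characters $\{\prod_{j\in S} y_j\}_{S \subseteq [k]}$ are orthonormal under the uniform measure, so Parseval's identity gives
\[
\sum_{S} \hat{P}(S)^2 = \E_{y}[P(y)^2] \mper
\]

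Third, since $P$ takes values in $[0,1]$, we have $P(y)^2 \leq 1$ pointwise, so $\E_y[P(y)^2] \leq 1$.

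Combining the three steps yields $\sum_S \abs{\hat{P}(S)} \leq 2^{k/2}$. This is in fact slightly stronger than needed, since for a $\Bits$-valued predicate $\E[P^2] = \Pr[P=1]$.

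The only point requiring care is the normalization. If one used the unnormalized convention $2^{-k}\sum_y$ attached to a distribution, as in \cref{def:distcomp}, Parseval would pick up a different factor. For a function $P$, the uniform-expectation convention above is the one under which the fact holds, and it is the one we adopt.
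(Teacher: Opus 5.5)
Your proof is correct and is the argument the paper has in mind: the paper says the fact follows immediately from Plancherel's theorem (stated with $\hat f(S) = \E_y f(y)\prod_{j\in S} y_j$), and your Cauchy--Schwarz over the $2^k$ subsets together with $P^2 \le 1$ is the implicit remaining step. One small correction to your closing aside: the convention $2^{-k}\sum_y$ in \cref{def:distcomp} is exactly the uniform expectation, so no different factor would arise there.
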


Both of these facts follow immediately from Plancherel's theorem.
\begin{fact}[Plancherel's theorem]
 \label{plancherel}
    For any function $f:\Fits^k\to\R$, we have 
    \begin{equation*}\frac{1}{2^k}\sum_{y\in\Fits^k}f(y)^2 = \sum_{S\seq[k]}\hat{f}(S)^2\mcom\end{equation*}
    where recall that $\hat{f}(S):= \E_{y\in\Fits^k}f(y)\prod_{j\in S}y_j$.
\end{fact}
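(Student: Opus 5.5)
Plancherel's theorem is a purely finite statement about functions on $\Fits^k$. I would prove it by expanding the right-hand side and using orthogonality of the characters $\chi_S(y) \defeq \prod_{j\in S} y_j$.

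\textbf{Orthogonality.} First I show that for $S,T\seq[k]$,
\begin{equation*}
\frac{1}{2^k}\sum_{y\in\Fits^k}\chi_S(y)\chi_T(y) = \mathbf{1}[S=T] \mper
\end{equation*}
Since $y_j^2=1$, the product $\chi_S\chi_T$ equals $\chi_{S\triangle T}$. The average then factors over coordinates as $\prod_{j\in S\triangle T}\E_{y_j}[y_j]$. This is $0$ unless $S\triangle T=\emptyset$, and in that case it is $1$.

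\textbf{Fourier expansion.} Next I show that $f=\sum_{S}\hat f(S)\chi_S$. The $2^k$ characters are orthonormal with respect to the inner product $\ip{g,h}=2^{-k}\sum_y g(y)h(y)$, so they are linearly independent. The space of real functions on $\Fits^k$ has dimension $2^k$, so the characters form an orthonormal basis. The coefficient of $f$ on $\chi_S$ is therefore $\ip{f,\chi_S}$, which is exactly $\hat f(S)$ by definition.

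\textbf{Conclusion.} Finally, $\ip{f,f}=\sum_{S,T}\hat f(S)\hat f(T)\ip{\chi_S,\chi_T}=\sum_S \hat f(S)^2$, which is the claimed identity.

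There is no real obstacle. The only point needing care is the spanning claim, and the dimension count above settles it.

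\textbf{The two preceding facts.} Since the paper says both facts follow from Plancherel, I also sketch them.

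For the $\ell_1$ bound: since $P$ takes values in $[0,1]$, we have $\sum_S\hat P(S)^2=\E[P^2]\le 1$. Cauchy--Schwarz over the $2^k$ sets $S$ then gives $\sum_S|\hat P(S)|\le 2^{k/2}$.

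For the distribution-complexity fact, take $f=\cQ$ and let $\mathcal{Z} \defeq \Fits^k\setminus\supp(\cQ)$, which is nonempty by assumption; $\mathcal{Z}$ is where $\cQ$ vanishes.
\begin{itemize}
\item The normalization is $\hat\cQ(\emptyset)=2^{-k}$.
\item The variance of $\cQ$ viewed as a function on the uniform cube is $\sum_{S\ne\emptyset}\hat\cQ(S)^2 = 2^{-k}\sum_y \cQ(y)^2 - 4^{-k}$.
\item Cauchy--Schwarz on the support, which has at most $2^k-1$ points, gives $\sum_y\cQ(y)^2\ge 1/(2^k-1)$.
\item Hence the variance is at least $2^{-k}\bigl(\frac{1}{2^k-1}-\frac{1}{2^k}\bigr)=\frac{1}{4^k(2^k-1)}$.
\end{itemize}
Averaging over the $2^k-1$ nonempty sets $S$ then yields $\max_{S\neq\emptyset}\hat\cQ(S)^2\ge \frac{1}{4^k(2^k-1)^2}$, that is, $\max_{S\neq\emptyset}|\hat\cQ(S)|\ge \frac{1}{2^k(2^k-1)}$. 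This is at least $4^{-k}$, but it is not yet the strict inequality $>4^{-k}$ that the fact states. Closing that gap would require a finer argument than simple averaging.
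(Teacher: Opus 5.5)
Your proof is correct and is the standard orthonormal-basis argument; the paper gives no proof of its own, since it cites Plancherel as a standard fact. One correction to your side remark: the bound you derived, $\max_{S\neq\emptyset}\abs{\widehat{\cQ}(S)}\geq \frac{1}{2^k(2^k-1)}$, is already \emph{strictly} larger than $4^{-k}=\frac{1}{2^k\cdot 2^k}$ because $2^k-1<2^k$, so your averaging argument does yield the strict inequality and no finer argument is needed.
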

\section{Previous Algorithms Achieve Certifiable Near-Optimality}
\label{sec:certifiableoptimality}
In this section, we discuss the algorithms of prior work, namely~\cite{AllenOW15,FeldmanPV15,RaghavendraRS17,AbascalGK21,GuruswamiKM22,GuruswamiHKM23,BasuHLM26,ChanDorsiXu26}, and we explain how to easily modify these algorithms so that their output satisfies \cref{def:certifiableoptimality}. We also discuss the limited exceptions (smoothed CSP refutation and $t$-wise uniformity) where \cref{def:certifiableoptimality} is not achieved. In what follows, we will typically assume that CSP instances have at least $m_{k,\ell} \defeq (n/\ell)^{k/2} \cdot \ell$ constraints (up to additional $\polylog(n)$ factors), and that algorithms run in $n^{O(\ell)}$ time. For a predicate $P$, we let $\mu_P$ be the fraction of constraints satisfied by a random assignment in expectation, i.e., in the case of $3$-SAT, $\mu_P = 7/8$ for the $3$-ary OR predicate.

\parhead{Example: tight refutation algorithms.} The works of~\cite{AllenOW15,RaghavendraRS17,AbascalGK21,GuruswamiKM22} give an algorithm to \emph{tightly} refute a random $k$-CSP $\Psi$ with predicate $P$. By ``tightly refute'', we mean that the algorithm certifies that $\val(\Psi) \leq \mu_P + o(1)$. Using the standard folklore algorithm, one can always recover a solution $\hat{x}$ to $\Psi$ where $\Psi(\hat{x}) \geq \mu_P$. Hence, the refutation algorithms show that $\hat{x}$ is certifiably $o(1)$-optimal; the certificate $\pi$ is the empty string, and the verifier of the certificate is the refutation algorithm.

This also extends to the semirandom model of~\cite{Feige07}, which is the case where the hypergraph of the CSP $\Psi$ is worst-case, but the literal negations are still uniformly random, as the works of~\cite{AbascalGK21,GuruswamiKM22} give tight refutation algorithms in this setting.

\parhead{Example: random planted CSPs.} The algorithms of~\cite{FeldmanPV15,BasuHLM26} for random planted CSPs (\cref{def:randomplantedcsp}) achieve certifiable $0$-optimality. This is because these algorithms, when given a random planted CSP with $\tilde{\Omega}(m_{r,\ell}) = \tilde{\Omega}( (n/\ell)^{r/2} \cdot \ell)$ constraints, where $r$ is the distribution complexity (\cref{def:distcomp}), recover in $n^{O(\ell)}$ time an assignment $\hat{x}$ that satisfies \emph{all} the constraints, and we trivially have that $\val(\Psi) \leq 1$ for any CSP $\Psi$. Similarly, in the case of semirandom planted CSPs with $m \geq \tilde{\Omega}(n^{k/2})$ constraints, the $\poly(n)$-time algorithm of~\cite{GuruswamiHKM23} achieves certifiable $o(1)$-optimality since it recovers an assignment $\hat{x}$ with $\Psi(\hat{x}) \geq 1 - o(1)$.

\parhead{Example: noisy random planted CSPs.} One can consider a variant of a random planted CSPs where the literal negations are sampled from a planting distribution $\cQ$ with probability $1 - \delta$, and otherwise are uniformly random, which makes the resulting CSP have value $\val(\Psi) = \alpha \pm o(1)$, with high probability, where $\alpha \defeq (1 -\delta) + \delta \mu_P$. It is fairly straightforward to observe that the planted CSP algorithms of~\cite{FeldmanPV15,BasuHLM26}, when given $\tilde{\Omega}( m_{k,\ell})$ such constraints, recover an assignment $\hat{x}$ with $\Psi(\hat{x}) \geq \alpha - o(1)$. On the other hand, it is also straightforward to observe that the refutation algorithms of~\cite{AllenOW15,RaghavendraRS17,AbascalGK21,GuruswamiKM22} are able to certify that $\val(\Psi) \leq \alpha + o(1)$. Hence, combining search and refutation yields a certifiably $o(1)$-optimal algorithm for this distribution.

We note that this distribution can interpolate between the standard refutation setting ($\delta = 1)$ and the standard planted CSP setting ($\delta = 0$). This is the nice advantage of outputting certifiably near-optimal solutions; it provides a clean framework to obtain both the guarantees of refutation and search, and thereby lets us interpolate between the two.

\parhead{Non-example: refutation algorithms for smoothed CSPs.} The case of smoothed CSPs~\cite{Feige07} is different from semirandom CSPs. In a smoothed CSP, one starts with an arbitrary worst-case CSP $\Phi$, and then replaces each literal negation sign with a uniformly random one independently with probability $p$. One can thus view a smoothed CSP as a ``linear combination'' of a worst-case CSP and a semirandom CSP. This is the approach taken in~\cite{GuruswamiKM22}, which gives a \emph{strong} refutation algorithm that certifies that such CSPs have value $\leq 1 - \delta$ for some absolute constant $\delta > 0$. Such algorithms cannot be turned into ones that output certifiably near-optimal solutions, due to standard hardness-of-approximation results for worst-case CSPs.

\parhead{Non-example: strong refutation algorithms for random CSPs.} The work of~\cite{AllenOW15} (when combined with~\cite{RaghavendraRS17,AbascalGK21,GuruswamiKM22}) shows that one can strongly refute a random CSP $\Psi$, i.e., certify that $\val(\Psi) \leq 1 - \delta$ for a constant $\delta > 0$, if $\Psi$ has at least $m \geq \tilde{\Omega}( m_{t,\ell})$, where $t$ is the largest integer such that the predicate $P$ does not have a $t$-wise independent distribution with supported contained in $P^{-1}(1)$, the set of local assignments satisfying the predicate $P$. Note that $t \leq k$. Such algorithms cannot be turned into ones that output certifiably near-optimal solutions, since the ``truth'' is that $\val(\Psi) \leq \mu_P + o(1)$, whereas the certificate only certifies that $\val(\Psi) \leq 1 - \delta$.

However, there is some form of certifiable near-optimality that still holds. The refutation certificate here uses a ``separating polynomial $Q$'' (\cite[Definition 3.15 and Lemma 3.16]{AllenOW15}), a polynomial of degree-$t$ with no constant term that satisfies $P(z) \leq 1 - \delta + Q(z)$ for all $z \in \Fits^k$. Using the same ideas as in the case of tight refutation, one can output an assignment $\hat{x}$ that is certifiably near-optimal for the ``CSP''\footnote{This is not an actual CSP since $Q$ might not be a predicate, i.e., we need not have $Q(z) \in \Bits$ for all $z \in \Fits^k$.} that is obtained by replacing the predicate $P$ with $1 - \delta + Q$.

The analogous statements also hold for the semirandom model of~\cite{Feige07}, via the arguments in~\cite{AbascalGK21,GuruswamiKM22}. A similar statement is also true for random CSPs without literals studied in~\cite{ChanDorsiXu26}. Once again, the predicate $P$ is upper-bounded via a degree-$t\leq k$ proxy polynomial $Q$ that can be equivalently viewed as $t\leq k$-CSP predicate, and certifiably near-optimality holds for the ``proxy predicate''.

 \section{Certifiable Near-Optimality for CSPs on Certifiably Expanding Hypergraphs}
\label{sec:csp-randomhypergraph}
In this section, we prove \cref{mthm:main,mthm:main2}. We note that by \cref{lem:randomhypergraphexpansion}, \cref{mthm:main2} implies \cref{mthm:main}, and thus it suffices to prove \cref{mthm:main2}. We state the formal version of \cref{mthm:main2} below, which shows the slightly stronger statement that \emph{any} pseudo-expectation $\pE$ can be rounded to an assignment $\hat{x}$ with $\Psi(\hat{x})$ close to $\pE[\Psi]$.

\begin{theorem}
\label{thm:hypergraph-gcr-main}
    Let $P \colon \Fits^k \to \Bits$ be a predicate, and let $P(y) \defeq \sum_{S \subseteq [k]} \hat{P}(S) y_S$. Let $\Psi$ be any $k$-CSP with predicate $P$ and an $(\ell, \lambda)$-certifiably expanding hypergraph $\cH$. Let $\pE$ be any degree $\ell + \frac {2k} {\lambda}$ pseudo-expectation on $\{\pm1\}^n$. Then there is an $n^{O\left(\ell + \frac{k}{\lambda}\right)}$ time algorithm which, when given $\pE$ as input, outputs an assignment $\hat{x}$ such that
    \[ \Psi(\hat{x}) \geq \pE \left[\Psi(x)\right] - O\left(\sqrt{\lambda}\right)\cdot \norm{\hat{P}}_1\,.\]
\end{theorem}    

It turns out that random hypergraphs with sufficiently many hyperedges are certifiably expanding. We will defer the proof to~\cref{sec:certifiableexpansion}, but state a corollary here. By applying \cref{lem:randomhypergraphexpansion} (which proves the certifiable expansion of random hypergraphs) along with~\cref{thm:hypergraph-gcr-main}, we obtain the following:
\begin{corollary}
\label{cor:random-hypergraph-gcr}
    Let $P \colon \Fits^k \to \Bits$ be a predicate, and let $P(y) \defeq \sum_{S \subseteq [k]} \hat{P}(S) y_S$. Let $\Psi$ be a $k$-CSP drawn from \cref{model:csprandomhypergraph} with    \[ m \geq m_0 = 2^{O(k)} \cdot \left(\frac{n}{\ell}\right)^{k/2-1} \cdot \frac{n}{\delta^4}\cdot \log n\,.\]
    Then there is an $n^{O\left(\ell + \frac{k\norm{\hat{P}}_1^2}{\delta^2}\right)}$ time algorithm which with probability $1-1/\poly(n)$ outputs an assignment $\hat{x}$ such that
    \[ \Psi(\hat x) \geq \max_{x} \Psi(x) - O(\delta) \,.\]
    Here $\norm{\hat{P}}_1^2 \leq 2^k$, so the resulting $\norm{\hat{P}}_1^4$ factor in $m_0$ is absorbed into $2^{O(k)}$. 
\end{corollary}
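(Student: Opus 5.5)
The corollary should follow by plugging the random-hypergraph expansion guarantee (\cref{lem:randomhypergraphexpansion}) into \cref{thm:hypergraph-gcr-main}, with the parameters chosen so that the rounding loss $O(\sqrt{\lambda})\norm{\hat P}_1$ is $O(\delta)$. I will take $\lambda \defeq \delta^2/\norm{\hat{P}}_1^2$. Then $O(\sqrt{\lambda})\cdot\norm{\hat P}_1 = O(\delta)$, and the degree needed in \cref{thm:hypergraph-gcr-main} is $\ell + 2k/\lambda = \ell + 2k\norm{\hat P}_1^2/\delta^2$. This matches the claimed running time $n^{O(\ell + k\norm{\hat P}_1^2/\delta^2)}$.

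The steps are as follows.

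\textbf{Step 1: expansion of the random hypergraph.} Invoke \cref{lem:randomhypergraphexpansion} to conclude that, with probability $1-1/\poly(n)$ over the random hypergraph $\cH$, $\cH$ is $(O(\ell),\lambda)$-certifiably expanding. I expect the lemma to require $m \gtrsim 2^{O(k)} (n/\ell)^{k/2-1}\, n\log n/\lambda^2$. Since $\lambda^2 = \delta^4/\norm{\hat P}_1^4$ and $\norm{\hat P}_1^4 \le 2^{2k}$ by the Plancherel fact, this requirement is exactly $m_0$. The literal negations are chosen after $\cH$ is fixed, and the expansion property depends only on $\cH$, so the adversary's choice is irrelevant here.

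\textbf{Step 2: compute an optimal pseudo-expectation.} Run the SoS algorithm (\cref{sosfkp}) at degree $d = O(\ell) + 2k/\lambda$ to obtain $\pE$ that maximizes $\pE[\Psi]$. By the observation preceding \cref{sosfkp}, $\pE[\Psi] \ge \max_x \Psi(x) - 2^{-n}$.

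\textbf{Step 3: round.} Apply \cref{thm:hypergraph-gcr-main} to this $\pE$ to obtain $\hat x$ with
\[
\Psi(\hat x) \ge \pE[\Psi] - O(\delta) \ge \max_x\Psi(x) - O(\delta).
\]
The $4^{-n}$ approximation error in the computed moments is negligible; it can be absorbed by noting that the rounding is Lipschitz in the moments, or by slightly perturbing toward a feasible $\pE$. If the rounding in \cref{thm:hypergraph-gcr-main} is randomized, with a guarantee only in expectation, I will amplify it by repetition together with Markov's inequality. This is possible because $\Psi(\hat x)$ can be evaluated directly, so we keep the best assignment found. This gives the high-probability statement.

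The main obstacle is Step 1: matching the parameters of \cref{lem:randomhypergraphexpansion}, namely its dependence on $\lambda$ and the form of its degree bound, to $m_0$. That lemma is proved only later in the paper, so I am assuming it has the form $m \ge 2^{O(k)}(n/\ell)^{k/2-1} n\log n/\lambda^2$ at degree $O(\ell)$. If its $\lambda$-dependence differs, the exponent of $\delta$ in $m_0$ would change accordingly, but the structure of the argument stays the same.
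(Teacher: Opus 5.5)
Your proposal is correct and is exactly the paper's argument. The paper plugs \cref{lem:randomhypergraphexpansion} into \cref{thm:hypergraph-gcr-main} with $\lambda = \delta^2/\norm{\hat{P}}_1^2$ and applies it to an SoS-optimal pseudo-expectation; that lemma does carry the $C_k\, m_{k,\ell}\log n/\lambda^2$ density requirement you guessed, and it yields $(2\ell,\lambda)$-certifiable expansion. One small simplification: the rounding in \cref{thm:hypergraph-gcr-main} is already deterministic, since it enumerates all conditionings $(T,z)$ and derandomizes via conditional expectations, so no amplification is needed and the failure probability comes only from the expansion event.
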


In the remainder of this section, we prove \cref{thm:hypergraph-gcr-main}.
Before we proceed to the proof, let us define some notation. Let
\[ x_S = \prod_{i \in S} x_i \quad \text{and} \quad \mathbf{x}_S = \left(x_i \right)_{i \in S}\,.\]
The algorithm in~\cref{thm:hypergraph-gcr-main} is the canonical SDP relaxation and \emph{global correlation rounding} scheme~\cite{BarakRS11, RaghavendraT12}.

\begin{mdframed}
    \begin{algorithm}[$k$-CSPs on $(\ell, \lambda)$-Certifiably Expanding Hypergraphs]
    \label{alg:hypergraph-gcr-main}\mbox{}
        \begin{description}
            \item[Input:] $0 < \lambda < 1$, a $k$-uniform $(\ell, \lambda)$-certifiably expanding hypergraph $\cH$, predicate $P$, signs $b_C$ for $C \in \cH$, and a degree-$\left(\ell + \frac {2k} {\lambda}\right)$ pseudo-distribution
            
            \item[Operations:]\mbox{}
            \begin{enumerate}
                \item For $T \subseteq [n]$ of size at most $k/\lambda$ and $z \in \{\pm1\}^{\vert T \vert}$:
                \begin{enumerate}
                    \item Let $\pE'$ be the pseudo-distribution conditioned on $\mathbf{x}_T = z$.                    \item For $i \in [n]$:
                    \begin{enumerate}
                        \item Set 
                        \[ x_i^{(T, z)} = \argmax_{b \in \{\pm 1\}} \E_D \Psi(x \vert x_i = b)\,,\]
                        where $D$ is the product of the marginals of $\pE'$ on all unfixed coordinates and equal to $x_j^{(T, z)}$ on all $j < i$.
                    \end{enumerate}
                \end{enumerate}
            \end{enumerate}
            \item[Output:] The best assignment $x^{(T,z)}$
        \end{description}
    \end{algorithm}
\end{mdframed}

\paragraph{Reducing global correlation.}
We now show how to achieve low global correlation by conditioning on at most  $O\left(k/\eps\right)$ many variables for any $0 < \eps < 1$. In our application, we will eventually take $\eps = \lambda$. Specifically, we show the following:
\begin{lemma}
\label{lem:low-gc-k}
    Let $0 < \eps < 1$ and $k \in \mathbb{N}$. Then for any pseudodistribution $\pE$ of degree at least $2k/\eps + k$ over $\{\pm 1\}^n$ there exists $r \leq k/\eps$ such that, simultaneously for every $2 \leq s \leq k$,
    \[ \E_{A \sim [n]^r} \E_{\mathbf{x}_A} \E_{C \sim [n]^s} \left(\pE[x_C \,\vert\, \mathbf{x}_A] - \prod_{q=1}^s \pE[x_{C_q} \,\vert\, \mathbf{x}_A] \right)^2 \leq \eps\,, \]
    where for an ordered tuple $I=(I_1,\ldots,I_t)$ we write $x_I=\prod_{q=1}^t x_{I_q}$.
\end{lemma}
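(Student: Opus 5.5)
We will use the standard potential-function (entropy/variance decrement) argument from global correlation rounding, with a potential chosen so that a single conditioning step handles all arities $2 \le s \le k$ simultaneously. For $r \ge 0$ define
\[
\Phi_r \defeq \E_{A \sim [n]^r} \E_{\mathbf{x}_A} \E_{i \sim [n]} \Var\bigl[x_i \,\vert\, \mathbf{x}_A\bigr]
= \E_{A}\E_{\mathbf{x}_A}\E_i \bigl(1 - \pE[x_i \,\vert\, \mathbf{x}_A]^2\bigr)\in[0,1].
\]
Here $\mathbf{x}_A$ is drawn according to the pseudo-distribution's (genuine) marginal on $A$. Conditioning on at most $k/\eps$ coordinates and then evaluating degree-$\le k$ moments needs degree $k/\eps + k$, and the squared quantities need degree $2k/\eps+k$. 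The degree hypothesis covers this, so all conditional pseudo-expectations are well defined and satisfy the law of total (pseudo-)variance.

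\textbf{Step 1: one conditioning step shrinks $\Phi$ by the correlation of an $s$-tuple.} Fix $A$ and $\mathbf{x}_A$ and write $\pE'$ for the conditioned functional. The key claim is that for every $2\le s\le k$,
\[
\E_{C\sim[n]^s}\Bigl(\pE'[x_C]-\prod_q \pE'[x_{C_q}]\Bigr)^2 \;\le\; k\Bigl(\E_i \Var'[x_i] - \E_{j}\E_{x_j}\E_i \Var'[x_i\,\vert\,x_j]\Bigr),
\]
where the right-hand side is $k$ times the expected variance drop from conditioning on one more uniformly random coordinate. To prove it, telescope the difference $\pE'[x_C]-\prod_q\pE'[x_{C_q}]$ as a hybrid sum over $q=2,\dots,s$:
\[
\sum_{q}\Bigl(\pE'[x_{C_{\le q}}]\prod_{p>q}\pE'[x_{C_p}]-\pE'[x_{C_{<q}}]\prod_{p\ge q}\pE'[x_{C_p}]\Bigr).
\]
Each term is bounded in absolute value by $\abs{\Cov'(x_{C_{<q}},x_{C_q})}$, since all factors lie in $[-1,1]$. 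By Cauchy--Schwarz, the square of the sum is at most $(s-1)\sum_q \Cov'(x_{C_{<q}},x_{C_q})^2$. For the $2$-wise case one usually bounds $\Cov(x_i,x_j)^2\le \Var(x_i)-\E_{x_j}\Var(x_i\vert x_j)$. Here the first argument is a monomial, so we instead use the symmetric form $\Cov'(f,x_j)^2 \le \Var'(f)\cdot(\Var'(x_j)-\E_{f}\ldots)$. This is awkward, so I would rather pick $j=C_q$ as the conditioning variable and use
\[
\Cov'(g,x_j)^2 = \Var'(x_j)\,\bigl(\text{drop in }\Var'(g)\text{ under conditioning on }x_j\bigr)\le \Var'(g)-\E_{x_j}\Var'(g\vert x_j),
\]
which is valid for $\pm1$ variables $x_j$. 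Summing over $q$ then gives a drop in a potential over monomials.

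\textbf{Step 2: choose the potential to match.} Because of the Step 1 issue, I will actually run the argument with the potential $\Phi_r^{(\le k)}\defeq \sum_{t=1}^{k-1}\E_{B\sim[n]^t}\E\,\Var[x_B\vert \mathbf{x}_A]$, which is at most $k$ in total. Step 1 shows that for each $s$ the $s$-correlation at level $r$ is at most $(s-1)$ times $\Phi_r^{(\le k)}-\Phi_{r+1}^{(\le k)}$; the factor comes from each $q$ contributing the drop of the $(q-1)$-monomial term. We still need one index $r$ that works simultaneously for all $s$, and the $k$ budget must be arranged so that the final count is $k/\eps$ rather than $k^2/\eps$. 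To get this, I would normalize by weighting the hybrid terms so that the total weight is $O(1)$ per unit of potential, for example by using the potential $\E_{t\sim[k-1]}$ instead of the sum. I expect this bookkeeping, getting $k/\eps$ rather than $k^2/\eps$ while covering all $s$ at once, to be the main obstacle.

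\textbf{Step 3: averaging.} The potentials are nonincreasing in $r$ by the law of total variance and bounded by the chosen normalization. Summing the drops over $r=0,\dots,R-1$ with $R=\lfloor k/\eps\rfloor$ telescopes to at most the initial potential. Hence some $r\le R$ has a drop at most (initial potential)$/R$. By Step 1, the left-hand side of the lemma at that $r$ is at most $\eps$ for every $2\le s\le k$ simultaneously, since one drop bound controls all $s$. The conditioning on the random $A$ and $\mathbf{x}_A$, and the averaging over them, commute with these steps by linearity.
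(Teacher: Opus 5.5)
\textbf{There is a genuine gap: your final argument only gives $r = O(k^2/\eps)$, not the $r \le k/\eps$ the statement requires.}

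Your final plan uses the same potential as the paper, $\Phi_r=\sum_{t=1}^{k-1}\E_{A}\E_{\mathbf{x}_A}\E_{B\sim[n]^t}\widetilde{\Var}(x_B\mid\mathbf{x}_A)$. It also uses the same hybrid telescoping and the same variance-drop identity. The loss is in Step~1:
\begin{itemize}
\item you bound each hybrid term by $\abs{\widetilde{\Cov}(x_{C_{<q}},x_{C_q})}$, discarding the prefactor $\prod_{p>q}\pE'[x_{C_p}]$;
\item you then apply unweighted Cauchy--Schwarz over the $s-1$ terms;
\item finally you discard the factor $\widetilde{\Var}(x_{C_q})\le 1$.
\end{itemize}
What this proves is $\E_C(\pE'[x_C]-\prod_q\pE'[x_{C_q}])^2\le (s-1)(\Phi_r-\Phi_{r+1})$. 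So Step~3 (``drop at most $\eps$ implies error at most $\eps$'') is off by a factor $s-1$. With $\Phi_0\le k$ you only get some $r=O(k^2/\eps)$, which also exceeds the degree budget $2k/\eps+k$ in the hypothesis.

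The fix you suggest cannot close this. Averaging over levels instead of summing, or any reweighting $\sum_t w_t\cdot(\text{level-}t\text{ variance})$, is only a rescaling. The initial potential becomes $\sum_t w_t$, the per-step loss factor at $s=k$ becomes $(k-1)\max_t w_t^{-1}$, and their product is always at least $(k-1)^2$. Your first Step~1 claim, with the single-variable potential, is also unjustified, though you abandon it.

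\textbf{The missing idea} is a weighted Cauchy--Schwarz that keeps exactly the factors you threw away. Write $\mu_p\defeq\pE'[x_{C_p}]$. The $q$-th hybrid term is
\[
\Bigl(\prod_{p>q}\mu_p\sqrt{1-\mu_q^2}\Bigr)\cdot\widetilde{\Cov}(x_{C_{<q}},x_{C_q})/\sqrt{1-\mu_q^2},
\]
with $0/0\defeq 0$. The squared weights telescope:
\[
\sum_{q=2}^{s}\Bigl(\prod_{p>q}\mu_p^2\Bigr)(1-\mu_q^2)=1-\prod_{p\geq 2}\mu_p^2\leq 1 .
\]
Hence $(\pE'[x_C]-\prod_p\mu_p)^2\le\sum_{q=2}^s\widetilde{\Cov}(x_{C_{<q}},x_{C_q})^2/\widetilde{\Var}(x_{C_q})$. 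By the identity you already wrote down, each summand \emph{equals} the drop in $\widetilde{\Var}(x_{C_{<q}})$ from conditioning on $x_{C_q}$.

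Averaging over $C$, where $C_{<q}$ and $C_q$ are independent and uniform, gives $\E_C(\cdots)^2\le\Phi_r-\Phi_{r+1}$ with no $s$-dependent loss. So a single $r\le k/\eps$ with drop at most $\eps$ handles all $2\le s\le k$ simultaneously.

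This is the paper's argument. It telescopes from the other end, with prefactors $\prod_{q<j}\mu_q$ and covariances $\widetilde{\Cov}(x_{C_j},x_{C_{>j}})$, which is symmetric to yours.
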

Note that the above lemma, with a worse bound on $r$, is implied by~\cite{AlevJT19}. In particular, their bound on the number of conditioning rounds required is exponential in $k$. We improve this dependence in the specific case of the complete hypergraph to $k/\eps$.
\begin{proof}
    We consider the potential
    \[ \Phi(r) = \E_{A \sim [n]^r} \E_{\mathbf{x}_A} \sum_{j = 1}^{k-1} \E_{T \sim [n]^j} \widetilde{\Var}(x_T \,\vert\, \mathbf{x}_A) \,,\]
    where $\widetilde{\Var}(x_T)$ denotes the variance of the parity associated with the ordered tuple $T$ on the corresponding local distribution. Note that since $x_T^2 =1$, the potential $\Phi$ is in $[0,k]$.

    We now consider the evolution of $\Phi(r)$ whenever 
    \[ \E_{A \sim [n]^r} \E_{\mathbf{x}_A} \E_{C \sim [n]^s} \left(\pE[x_C \,\vert\, \mathbf{x}_A] - \prod_{q=1}^s \pE[x_{C_q} \,\vert\, \mathbf{x}_A] \right)^2 > \eps \]
    for some $2 \leq s \leq k$. Note that we can write the expression inside the square as
    \[ \pE[x_C \vert \mathbf{x}_A] - \prod_{q=1}^s \pE[x_{C_q} \vert \mathbf{x}_A] = \sum_{j=1}^{s-1}  \left[ \prod_{q < j} \pE[x_{C_q} \,\vert\, \mathbf{x}_A]\right] \left[ \pE \left[ \prod_{\ell = j}^s x_{C_{\ell}} \,\middle\vert\, \mathbf{x}_A \right] - \pE [x_{C_{j}}\mid\mathbf{x}_A] \pE \left[\prod_{\ell = j+1}^s x_{C_\ell}\,\middle\vert\,\mathbf{x}_A\right]\right]\,.\]
    Let
    \[ m(C)_{< j}\vcentcolon= \prod_{q < j} \pE[x_{C_q} \,\vert\, \mathbf{x}_A] \quad \text{and} \quad x_{C_{>j}} \vcentcolon= \prod_{\ell = j+1}^s x_{C_{\ell}}\,.\]
    By Cauchy-Schwarz, we have that
    \begin{align*}
        \pE[x_C \vert \mathbf{x}_A] - \prod_{q=1}^s \pE[x_{C_q} \vert \mathbf{x}_A] &= \sum_{j=1}^{s-1} m(C)_{<j} \cdot \widetilde{\Cov}\left(x_{C_j}, x_{C_{>j}} \, \middle\vert\, \mathbf{x}_A \right) \\
        &= \sum_{j=1}^{s-1} \left( m(C)_{<j} \cdot \sqrt{\widetilde{\Var}(x_{C_j} \, \vert \, \mathbf{x}_A)} \right) \cdot \frac{\widetilde{\Cov}\left(x_{C_j}, x_{C_{>j}} \, \middle\vert\, \mathbf{x}_A \right)}{\sqrt{\widetilde{\Var}(x_{C_j} \, \vert \, \mathbf{x}_A)}} \\
        &\leq \sqrt{\sum_{j=1}^{s-1} m(C)_{<j}^2 \cdot \widetilde{\Var}(x_{C_j} \, \vert \, \mathbf{x}_A)} \cdot \sqrt{\sum_{j=1}^{s-1} \frac{\widetilde{\Cov}^2\left(x_{C_j}, x_{C_{>j}} \,\middle\vert \, \mathbf{x}_A\right)}{\widetilde{\Var}(x_{C_j} \, \vert \, \mathbf{x}_A)}}\,.
    \end{align*}
    We now consider the term $\sum_{j=1}^{s-1} m(C)_{<j}^2 \cdot \widetilde{\Var}(x_{C_j} \, \vert \, \mathbf{x}_A)$. Expanding out definitions and using that $\pE[x_{C_j}^2\mid\mathbf{x}_A] = 1$, we have that
    \[ \sum_{j=1}^{s-1} m(C)_{<j}^2 \cdot \widetilde{\Var}(x_{C_j} \, \vert \, \mathbf{x}_A) = \sum_{j=1}^{s-1} m(C)_{<j}^2 \cdot \left(1 - \left(\pE[x_{C_j} \,\vert \, \mathbf{x}_A]\right)^2\right) = \sum_{j=1}^{s-1} \left(m(C)_{<j}^2 - m(C)_{<j+1}^2\right)\,. \]
    The sum telescopes, and since $m(C)_{<j}^2 \leq 1$ for all $j$ we have that 
    \[ \sum_{j=1}^{s-1} m(C)_{<j}^2 \cdot \widetilde{\Var}(x_{C_j} \, \vert \, \mathbf{x}_A) \leq 1\,.\]
    Substituting this bound into the above expression and squaring, we conclude that
    \begin{align*}
        \left(\pE[x_C \vert \mathbf{x}_A] - \prod_{q=1}^s \pE[x_{C_q} \vert \mathbf{x}_A]\right)^2 \leq \sum_{j=1}^{s-1} \frac{\widetilde{\Cov}^2\left(x_{C_j}, x_{C_{>j}} \,\middle\vert \, \mathbf{x}_A\right)}{\widetilde{\Var}(x_{C_j} \, \vert \, \mathbf{x}_A)}\,.
    \end{align*}
    Thus, whenever 
    \[ \E_{A \sim [n]^r} \E_{\mathbf{x}_A} \E_{C \sim [n]^s} \left(\pE[x_C \,\vert\, \mathbf{x}_A] - \prod_{q=1}^s \pE[x_{C_q} \,\vert\, \mathbf{x}_A] \right)^2 > \eps\,, \]
    we have that
    \[ \E_{A \sim [n]^r} \E_{\mathbf{x}_A} \E_{C \sim [n]^s} \sum_{j=1}^{s-1} \frac{\widetilde{\Cov}^2\left(x_{C_j}, x_{C_{>j}} \,\middle\vert \, \mathbf{x}_A\right)}{\widetilde{\Var}(x_{C_j} \, \vert \, \mathbf{x}_A)} > \eps\,. \]
    Using that, when $Y\in\{\pm1\}$,
    \[ \Var(X) - \E_Y \Var(X \vert Y) = \frac{\Cov^2(X, Y)}{\Var(Y)}\,,\]
    where the ratio is defined to be zero if $\Var(Y)=0$, we can further bound the inner sum, yielding
    \[ \E_{A \sim [n]^r} \E_{\mathbf{x}_A} \E_{C \sim [n]^s} \sum_{j=1}^{s-1} \left( \widetilde{\Var}(x_{C_{>j}}\,\vert\, \mathbf{x}_A) - \E_{x_{C_j}\mid\mathbf{x}_A} \widetilde{\Var}(x_{C_{>j}}\,\vert\, \mathbf{x}_A, \, x_{C_j})\right) > \eps\,. \]
    Since $C$ is uniform in $[n]^s$, the coordinate $C_j$ and the ordered suffix $C_{>j}$ are independent and uniform in $[n]$ and $[n]^{s-j}$, respectively. Reindexing by $t=s-j$, the preceding display becomes
    \[ \sum_{t=1}^{s-1} \E_{A \sim [n]^r} \E_{\mathbf{x}_A} \E_{T \sim [n]^t} \left( \widetilde{\Var}(x_T\,\vert\,\mathbf{x}_A) - \E_{a\sim[n]}\E_{x_a\mid\mathbf{x}_A} \widetilde{\Var}(x_T\,\vert\,\mathbf{x}_A,x_a)\right) > \eps\,. \]
    Each summand is nonnegative by the law of total variance, so the left-hand side is at most $\Phi(r)-\Phi(r+1)$. Thus, for every $2\leq s\leq k$, factorization error greater than $\eps$ implies $\Phi(r)-\Phi(r+1)>\eps$.
    Finally, since $\Phi(0)\leq k$ and $\Phi(r)\geq0$ for all $r$, there is some $r\leq k/\eps$ for which $\Phi(r)-\Phi(r+1)\leq\eps$. For this same $r$, the preceding implication shows that the factorization error is at most $\eps$ simultaneously for every $2\leq s\leq k$.
\end{proof}

\paragraph{Proof of the main theorem.}
We will prove the main theorem in two steps. First, we will show that a \emph{randomized} rounding procedure produces a solution which (in expectation) has value at least $\pE\left[ \Psi(x)\right] - O(\sqrt{\lambda}) \cdot \norm{\hat{P}}_1$. We then will derandomize this rounding procedure.

\begin{lemma}
\label{lem:expected-rounding-err}
    Let $\Psi$ be a CSP with predicate $P$ over a hypergraph $\cH$ which is $(\ell, \lambda)$-certifiably expanding. Let $\pE$ be a distribution over $\{\pm 1\}^n$ of degree at least $\ell + \frac {2k} {\lambda}$. Then there exists $r \leq \frac k {\lambda}$ such that 
    \[ \E_{T \in \binom{[n]}{r}, \mathbf{x}_T} \E_{T, \mathbf{x}_T}^{\otimes} \Psi(x) \geq \pE \Psi(x) - O(\sqrt{\lambda}) \cdot \norm{\hat{P}}_1\,,\]
    where $\pE_{T, \mathbf{x}_T}$ denotes the distribution given by the product of the marginals on each coordinate after conditioning $\pE$ on the values of $x$ in the set $T$ given by $\mathbf{x}_T$.
\end{lemma}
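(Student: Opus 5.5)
Fourier-expand the predicate and reduce the per-clause rounding error, for each $S\subseteq[k]$, to one hypergraph-weighted sum of factorization errors. Write $\Psi(x) = \sum_{S\subseteq[k]} \hat{P}(S)\, \frac1m\sum_{C\in\cH} b_{C,S}\, x_{C_S}$, where $C_S$ is the sub-tuple of $C$ indexed by $S$ and $b_{C,S}=\prod_{j\in S} b_{C,j}$. Fix $r\le k/\lambda$ as given by \cref{lem:low-gc-k} with $\eps=\lambda$; the degree requirement $2k/\lambda + k$ is covered by $\ell+2k/\lambda$ once $\ell\ge k$. For each conditioning $(T,\mathbf{x}_T)$, the product distribution $\pE^{\otimes}_{T,\mathbf{x}_T}$ gives $x_{C_S}$ expectation $\prod_{j\in S}\pE[x_{C_j}\mid \mathbf{x}_T]$. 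Terms with $|S|\le 1$ agree exactly with $\pE[\,\cdot\mid\mathbf{x}_T]$, and averaging the conditioned pseudo-expectation over $\mathbf{x}_T$ recovers $\pE\Psi$. So it suffices to show, for each $|S|\ge2$,
\[ \E_{T,\mathbf{x}_T}\frac1m\sum_{C\in\cH}\bigl|\pE[x_{C_S}\mid\mathbf{x}_T]-\textstyle\prod_{j\in S}\pE[x_{C_j}\mid\mathbf{x}_T]\bigr| \le O(\sqrt\lambda). \]
Multiplying by $|\hat P(S)|$ and summing over $S$ then gives the $\norm{\hat P}_1$ factor.

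The key step converts this hypergraph average into the complete-hypergraph average controlled by \cref{lem:low-gc-k}, using certifiable expansion. Write the error as $\pE[x_{C_S}]-\prod_j\pE[x_{C_j}]$ and introduce the auxiliary product-form quantity. The natural vehicle is to apply \cref{def:certifiableexpansion} not to $x$ but to a polynomial substitution, since certifiable expansion is an SoS inequality valid for any pseudo-distribution satisfying $x_i^2=1$. To handle the absolute value with signs $\sigma_C=\mathrm{sign}(\cdot)$ depending on $C$, first pass to squares via Cauchy--Schwarz:
\[ \frac1m\sum_C|\cdot|\le\Bigl(\frac1m\sum_C(\cdot)^2\Bigr)^{1/2}. \]
Then expand $(\pE[x_{C_S}\mid\mathbf{x}_T]-\prod_j\pE[x_{C_j}\mid\mathbf{x}_T])^2$ as $\pE\otimes\pE$-type quantities on two independent copies $x,x'$ (a degree-doubled pseudo-distribution). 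Each squared error becomes a linear combination of monomials $\prod_{j\in S} y^{(\cdot)}_{C_j}$, where $y_i$ is a product such as $x_ix'_i$ or $x_i\mu_i$ with $\mu_i=\pE[x_i\mid\mathbf{x}_T]$.

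\textbf{Main obstacle.} The certifiable expansion inequality is stated for $\pm1$ variables, while the substituted quantities lie in $[-1,1]$. I would first try to argue multilinearly: the hypergraph average $\frac1m\sum_C\prod_{j\in S} y_{C_j}$ is multilinear in $y$, so its deviation from $(\frac1n\sum_i y_i)^{|S|}$ is maximized at vertices of $[-1,1]^n$. For the pseudo-expectation part, use Booleanity of the two-copy product variables $x_ix'_i$, which square to $1$, and apply the degree-$\ell$ certificate inside $\pE_{T,\mathbf{x}_T}\otimes\pE_{T,\mathbf{x}_T}$. Each hypergraph average then equals the corresponding uniform average over $C\sim[n]^{|S|}$ up to $\pm O(\lambda)$. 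The uniform average of the squared error is at most $\lambda$ by \cref{lem:low-gc-k}, since sampling a multiset $T\in\binom{[n]}r$ versus a tuple in $[n]^r$ differs negligibly or can be matched by the same potential argument. The hypergraph average is therefore $O(\lambda)$, and taking the square root yields $O(\sqrt\lambda)$ per $S$. This completes the bound.
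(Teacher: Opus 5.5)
Your proposal is correct and is essentially the paper's proof. Both arguments Fourier-expand, pass to the root-mean-square factorization error over $\cH$ by Cauchy--Schwarz, split the square into the three terms $\pE'[x_{C\vert_S}]^2$, $\pE'[x_{C\vert_S}]\mu_{C\vert_S}$ and $\mu_{C\vert_S}^2$, replace each hypergraph average by the complete-hypergraph average via certifiable expansion, and finish with \cref{lem:low-gc-k} at $\eps=\lambda$ plus Jensen. The only difference is cosmetic: the paper treats all three terms uniformly through \cref{claim:product-pe}, whose entrywise products are exactly your two-copy variables $x_ix_i'$ (with $\mu$ viewed as a genuine product distribution), while your vertex/multilinearity argument for the $\mu$-terms gets the same bound up to an $O(k^2/n)$ repeated-index correction that the paper also leaves implicit.
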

Before we proceed to the lemma, we state the following claim, which we will need in its proof.
\begin{claim}
\label{claim:product-pe}
    Suppose that $\pE_1$ and $\pE_2$ are degree-$d$ pseudoexpectations over $\Fits^n$. Then, the ``entrywise product'' $\pE_1 \odot \pE_2$ over $\{\pm 1\}^{n}$ defined by the moments $(\pE_1 \odot \pE_2)[y_S] \defeq \pE_1[x_S] \pE_2[x_S]$ is also a degree-$d$ pseudoexpectation over $\Fits^{n}$.
\end{claim}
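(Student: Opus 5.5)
Since $\pE_1\odot\pE_2$ is specified by its values on multilinear monomials, I will extend it linearly to all polynomials of degree at most $d$ by first reducing each polynomial to its multilinear form via $x_i^2=1$. With this convention, normalization and Booleanity are automatic. For normalization, $(\pE_1\odot\pE_2)[1]=\pE_1[1]\pE_2[1]=1$. For Booleanity, a monomial times $x_i^2$ reduces to the same multilinear monomial, and $\pE_1,\pE_2$ themselves satisfy Booleanity, so the defining formula is consistent. The whole argument therefore comes down to positivity, which is the Schur product theorem applied to moment matrices.

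Let $M_1,M_2$ be the degree-$d/2$ moment matrices, indexed by multilinear monomials (sets) $S,T\subseteq[n]$ with $|S|,|T|\le d/2$, with entries $M_a[S,T]=\pE_a[x_Sx_T]=\pE_a[x_{S\triangle T}]$. The second equality uses Booleanity of $\pE_a$. Positivity of $\pE_a$ is exactly the statement $M_a\succeq 0$: after multilinearization, any $f$ of degree $\le d/2$ is $\sum_S c_Sx_S$, and $\pE_a[f^2]=c^\top M_a c$. Here I use that multilinearizing $f$ does not change $\pE_a[f^2]$, again by Booleanity.

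The moment matrix of the product functional is entrywise
\[
M[S,T]=(\pE_1\odot\pE_2)[y_{S\triangle T}]=\pE_1[x_{S\triangle T}]\,\pE_2[x_{S\triangle T}]=M_1[S,T]\,M_2[S,T],
\]
so $M=M_1\circ M_2$ is the Hadamard product. By the Schur product theorem, $M\succeq0$. Hence $(\pE_1\odot\pE_2)[f^2]=c^\top Mc\ge 0$ for all $f$ of degree $\le d/2$, which is positivity.

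The only point that needs care is that the product is defined on the monomial $y_S$ and not on arbitrary products. One must check that the entry $M[S,T]$ of the new moment matrix really equals $\pE_1[x_Sx_T]\,\pE_2[x_Sx_T]$; this holds because both factors depend only on $S\triangle T$. This bookkeeping step is where I expect any subtlety to lie. Once it is settled, the Schur product theorem finishes the proof, since the Hadamard product of PSD matrices is PSD (for example, $M_1\circ M_2$ is a principal submatrix of $M_1\otimes M_2\succeq0$).
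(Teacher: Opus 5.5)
Your proof is correct and complete. After reducing to multilinear polynomials via Booleanity, each moment-matrix entry depends only on $S\triangle T$, so the moment matrix of $\pE_1\odot\pE_2$ is exactly $M_1\circ M_2$, and the Schur product theorem gives positivity.

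The paper states this claim without any proof and invokes it directly in the proof of \cref{lem:expected-rounding-err}. So there is no separate route to compare; your Schur-product argument is the standard justification the paper implicitly relies on.

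One remark is worth making explicit, because the paper applies the claim to monomials with repeated indices, for example $x_{C\vert_S}$ with $C\in[n]^k$, and the terms in the expansion of $\bigl(\frac{1}{n}\ip{x,1^n}\bigr)^{\abs{S}}$. Under your multilinearization convention, the identity
\[
(\pE_1\odot\pE_2)[y_{i_1}\cdots y_{i_s}]=\pE_1[x_{i_1}\cdots x_{i_s}]\,\pE_2[x_{i_1}\cdots x_{i_s}]
\]
still holds for tuples with repeats. The reason is that each $\pE_a$ of such a monomial equals its value on the set of indices of odd multiplicity. Hence the entrywise definition on multiset monomials agrees with yours, which is exactly the form in which the paper uses the claim.
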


\begin{proof}[Proof of~\cref{lem:expected-rounding-err}]
    Consider the number of conditioning rounds $r$ such that for all $2 \leq s \leq k$
    \[ \E_{A \sim [n]^r} \E_{\mathbf{x}_A} \E_{C \sim [n]^s} \left(\pE[x_C \,\vert\, \mathbf{x}_A] - \prod_{q=1}^s \pE[x_{C_q} \,\vert\, \mathbf{x}_A] \right)^2 \leq \lambda\,, \]
    by~\cref{lem:low-gc-k}. Note that this occurs after at most $k/\lambda$ rounds, and thus will be true for some $r \leq k/\lambda$. We aim to show that the expected value of the assignment outputted by conditioning and independently sampling coordinates is at least $\pE[\psi] - O(\sqrt{\lambda})$ for this value of $r$. Let $\mu$ denote the distribution where $\mu_i = \pE'[x_i]$ and $\mu_S \defeq \prod_{i \in S} \pE'[x_i]$, where $\pE'$ the (random) conditioned pseudodistribution. The error incurred by independent rounding is
    \begin{flalign*}
    &\err(\pE') \defeq \abs{\pE'[\psi] - \mu(\Psi)} = \Abs{\frac{1}{\abs{\cH}} \sum_{S \subseteq [k]}\hat{P}(S)  \sum_{C \in \cH} \left(\prod_{i \in S} b_{C,i} \right) \left(\pE'[x_{C \vert_S}] - \mu_{C \vert_S}\right)} \\
    &\leq \frac{1}{\abs{\cH}} \sum_{S \subseteq [k]}\abs{\hat{P}(S)} \Abs{ \sum_{C \in \cH} \left(\prod_{i \in S} b_{C,i} \right) \left(\pE'[x_{C \vert_S}] - \mu_{C \vert_S}\right)} \\
    &\leq \frac{1}{\abs{\cH}} \sum_{S \subseteq [k]}\abs{\hat{P}(S)} \sqrt{ \sum_{C \in \cH} \left(\prod_{i \in S} b_{C,i} \right)^2 \sum_{C \in \cH} \left(\pE'[x_{C \vert_S}] - \mu_{C \vert_S}\right)^2} \\
    &=  \sum_{S \subseteq [k]}\abs{\hat{P}(S)} \sqrt{\frac{1}{\abs{\cH}} \sum_{C \in \cH} \left(\pE'[x_{C \vert_S}] - \mu_{C \vert_S}\right)^2} \\
    &= \sum_{S \subseteq [k]}\abs{\hat{P}(S)} \sqrt{\frac{1}{\abs{\cH}} \sum_{C \in \cH} \pE'[x_{C \vert_S}]^2 + \mu_{C \vert_S}^2 - 2\pE[x_{C \vert_S}]\mu_{C \vert_S}} \mper
    \end{flalign*}
    
    Let $\mathcal{K}$ be the complete hypergraph. The expression in the sum for $S\subseteq [k]$ of size $1$ is $0$, since the independent and correlated pseudo-distributions have identical expectations on linear functions. Thus, we consider only sets of size strictly greater than $1$, and note that $\pE'$ is still a degree $\ell$ pseudodistribution, so using that $\mathcal{H}$ is $(\ell, \lambda)$-certifiably expanding and~\cref{claim:product-pe}, we see that 
    \begin{flalign*}
    &\Abs{\frac{1}{\abs{\cH}} \sum_{C \in \cH} \pE'[x_{C \vert_S}]^2 - \frac{1}{n^k} \sum_{C \in \cK} \pE'[x_{C \vert_S}]^2} \leq \lambda \\
    &\Abs{\frac{1}{\abs{\cH}} \sum_{C \in \cH} \mu_{C \vert_S}^2 - \frac{1}{n^k} \sum_{C \in \cK} \mu_{C \vert_S}^2} \leq \lambda \\
    &\Abs{\frac{1}{\abs{\cH}} \sum_{C \in \cH} \pE'[x_{C \vert_S}]\mu_{C \vert_S} - \frac{1}{n^k} \sum_{C \in \cK} \pE'[x_{C \vert_S}]\mu_{C \vert_S}} \leq \lambda
    \end{flalign*}
    In particular, we have that
    \begin{flalign*}
    &\err(\pE) \leq \sum_{S \subseteq [k]}\abs{\hat{P}(S)} \sqrt{\lambda + \lambda + 2\lambda + \frac{1}{n^k} \sum_{C \in \cK} \pE'[x_{C \vert_S}]^2 + \mu_{C \vert_S}^2 - 2\pE'[x_{C \vert_S}]\mu_{C \vert_S}} \\
    &= \sum_{S \subseteq [k]}\abs{\hat{P}(S)} \sqrt{4 \lambda + \frac{1}{n^k} \sum_{C \in \cK} \left(\pE'[x_{C \vert_S}]- \mu_{C \vert_S}\right)^2}
    \end{flalign*}
    Applying Jensen's inequality and~\cref{lem:low-gc-k}, we have that (in expectation over the conditioning process), the total error is at most 
    \[ \E_{T, x_T} \err(\pE') \leq O(1) \cdot \norm{\hat{P}}_1 \cdot \sqrt{\lambda}\,. \]
    To finish, we note that the conditioning preserves the objective value in expectation, so $\E_{T, x_T} \pE' \Psi(x) = \pE \Psi(x)$.
\end{proof}

We are now ready to prove the main theorem. The bound on the value of the output solution follows by applying the method of conditional expectations to derandomize the independent rounding step of global correlation rounding.
\begin{proof}[Proof of~\cref{thm:hypergraph-gcr-main}]
    We first argue that one iteration of the loop produces a solution with value at least $\pE \Psi(x) - O(\sqrt{\lambda}) \cdot \norm{\hat{P}}_1$. For $T, z$ let $D(T, z)$ denote the distribution given by the product of the marginals of $\pE$ conditioned on $\mathbf{x}_T = z$. We have by~\cref{lem:expected-rounding-err} that there is some $T, z$ such that 
    \[ \E_{D(T, z)} \Psi(x) \geq \pE \Psi(x) - O(\sqrt{\lambda}) \cdot \norm{\hat{P}}_1\,.\]
    It now suffices to show that we can derandomize this independent rounding procedure (via the method of conditional expectations). Note that by standard arguments, the procedure in the inner loop produces a solution $\hat x$ with value at least $\E_{D(T, z)} \Psi(x)$. Specifically, at each step via the law of total expectation, we have that 
    \[ \E_{x_i \sim D} \E_D[ \Psi(x) \, \vert \, x_1, \ldots x_{i}] = \E_D[ \Psi(x) \, \vert \, x_1, \ldots x_{i-1} ] \]
    and thus fixing $x_i$ to the value that maximizes the conditional expectation only increases the expectation (over the remaining unfixed coordinates) at each step.

    We now analyze the algorithm's runtime. There are $n^{\frac k {\lambda}}$ iterations of the outer loop, and computing the relevant conditional pseudodistributions can be done in time $n^{\frac k {\lambda}}$. Note that the conditional expectations needed to derandomize independent rounding can be computed in time $\exp(k) \cdot n$ via linearity of expectation, and thus the cost of each iteration is $\exp(k) \cdot n^{\frac k{\lambda}}$, yielding the overall runtime bound. 
\end{proof}

\section{Certifiably Expanding Hypergraphs}
\label{sec:certifiableexpansion}
In this section, we show that random hypergraphs with sufficiently many hyperedges, two-sided rank-one splittable hypergraphs, and spectrally expanding graphs are certifiably expanding, as per \cref{def:certifiableexpansion}. We begin by recalling \cref{def:certifiableexpansion}.
\certexpansion
Equivalently, we may rephrase this as follows.
\begin{definition}
\label{def:k-csp-local-to-global}
    A $k$-uniform hypergraph $\cH$ with $m$ hyperedges is $(d, \lambda)$-certifiably expanding if for every degree-$d$ pseudo-expectation $\pE$ over $\Fits^n$ and $S \subseteq [k]$ with $\vert S \vert \geq 2$, it holds that
    \begin{flalign*}
    \Abs{\pE\left[\frac{1}{m} \sum_{C \in \cH} x_{C \vert_S} - \left(\frac{1}{n} \ip{x,1^n}\right)^{\abs{S}}\right]} \leq \lambda\,.
    \end{flalign*}
\end{definition}

The fact that a $\lambda$-spectral expander $G$ is certifiably expanding is straightforward, as we show below.
\begin{lemma}[Spectral expanders certifiably expanding]
\label{lem:spectralexpansion}
Let $G$ be a $\lambda$-spectral expander (two-sided). Then, $G$ is $(2, \lambda)$-certifiably expanding.
\end{lemma}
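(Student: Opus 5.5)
The plan is to reduce the statement to a degree-2 SoS certificate coming from the spectral decomposition of the normalized adjacency matrix. For a graph ($k=2$), the only relevant set is $S=\{1,2\}$. The quantity to control is
\[
\frac{1}{m}\sum_{(i,j)\in G} x_i x_j - \Big(\frac{1}{n}\ip{x,1^n}\Big)^2 .
\]
We read $G$ as a $D$-regular (multi)graph whose edges are ordered pairs, each undirected edge listed in both orientations, so $m = nD$. The first term equals $\frac{1}{n} x^\top M x$, where $M = A/D$ is the normalized adjacency matrix. The second term equals $\frac{1}{n} x^\top J_n x$, where $J_n = \frac1n 1 1^\top$ is the projector onto the all-ones vector. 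The difference is therefore the quadratic form $\frac{1}{n}x^\top (M - J_n) x$.

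Next we use the two-sided $\lambda$-spectral expansion hypothesis. Since $1^n$ is the top eigenvector of $M$ with eigenvalue $1$, the matrix $M - J_n$ has its spectrum contained in $[-\lambda,\lambda]$. Consequently both $\lambda I - (M-J_n)$ and $\lambda I + (M-J_n)$ are PSD. Writing each as $B^\top B$ expresses $x^\top(\lambda I \mp (M-J_n))x$ as a sum of squares of linear forms $\sum_t (b_t^\top x)^2$, which is a degree-2 SoS.

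Finally we apply the Booleanity constraints $x_i^2=1$, which give $x^\top I x = \sum_i x_i^2 = n$. This yields degree-2 SoS proofs of
\[
\lambda - \frac{1}{n}x^\top (M-J_n)x \ge 0 \quad\text{and}\quad \lambda + \frac1n x^\top(M-J_n)x \ge 0,
\]
which together are the two-sided bound in \cref{def:certifiableexpansion}. Equivalently, in the language of \cref{def:k-csp-local-to-global}, any degree-2 pseudo-expectation $\pE$ satisfies $\pE[(b_t^\top x)^2]\ge 0$ and $\pE[x_i^2]=1$, and these give $|\pE[\cdot]|\le\lambda$.

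The only potential obstacle is bookkeeping rather than mathematics. First, the normalization of $m$ versus ordered/unordered edges must be set so that $\frac1m\sum_{C}$ matches $\frac1n x^\top M x$. Second, if $G$ is not regular, the statement has to be interpreted with the uniform-measure normalization implicit in the definition (i.e., the hypothesis should be taken to mean $\norm{M - J_n}\le\lambda$ for the appropriately normalized $M$). Once that convention is fixed, the argument is simply the degree-2 SoS proof of the spectral bound.
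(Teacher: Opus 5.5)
Your proposal is correct and is essentially the paper's proof. Both write the difference as the quadratic form $\frac{1}{n}x^\top\bigl(M_G - \frac{1}{n}1^n(1^n)^\top\bigr)x$, use the two-sided spectral bound to conclude that $\lambda I \pm \bigl(M_G - \frac{1}{n}1^n(1^n)^\top\bigr)$ is PSD and hence a sum of squares of linear forms, and use Booleanity $\sum_i x_i^2 = n$ to turn $\frac{\lambda}{n}x^\top x$ into $\lambda$. Your remarks on edge orientation and regularity are only bookkeeping that the paper leaves implicit.
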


Next, we observe that~\cite[Lemma 5.5]{ChanDorsiXu26} and~\cite[Lemma 5.5 in the full version]{BasuHLM26} imply certifiable-expansion bounds for random hypergraphs.
\begin{lemma}[Random hypergraphs are certifiably expanding~\cite{ChanDorsiXu26,BasuHLM26}]
\label{lem:randomhypergraphexpansion}
Let $c_k/n \leq \lambda < 1$ and $k \leq \ell \leq c_{k,\lambda}n/\log n$. Let $\cH$ be a random $k$-uniform hypergraph with $m$ hyperedges. If
\(
    m \geq \frac{C_k}{\lambda^2} \cdot m_{k,\ell}\log n\,,
\)
then, with probability $1-o_n(1)$, $\cH$ is $(2\ell, \lambda)$-certifiably expanding.
\end{lemma}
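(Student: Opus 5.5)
The plan is to reduce the statement to a spectral norm bound on a Kikuchi-type matrix, following the arguments cited as \cite[Lemma 5.5]{ChanDorsiXu26} and \cite[Lemma 5.5 in the full version]{BasuHLM26}. By \cref{def:k-csp-local-to-global}, it suffices to produce, for each $S \subseteq [k]$ with $|S| = s \geq 2$, a degree-$2\ell$ SoS proof of
\[
\Abs{\frac{1}{m}\sum_{C\in\cH} x_{C|_S} - \Bigl(\tfrac1n\ip{x,1^n}\Bigr)^{s}} \leq \lambda .
\]
Writing $\bigl(\tfrac1n\ip{x,1^n}\bigr)^s = \frac{1}{n^s}\sum_{I\in[n]^s} x_I$, the left side is a polynomial $f(x)=\sum_{I\in[n]^s} w_I x_I$ with $w_I = \frac{\#\{C\in\cH: C|_S=I\}}{m} - n^{-s}$. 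The coefficient $w_I$ is a centered sum of $m$ i.i.d. bounded terms, since $C|_S$ is uniform on $[n]^s$. After multilinearizing tuples with repeated indices, those terms contribute only lower-degree pieces of total weight $O_k(1/n)$; this is why the hypothesis $\lambda \geq c_k/n$ is there. So the task becomes certifying $|f|\leq \lambda$ for a random centered degree-$s$ polynomial.

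For the certificate I would use the level-$\ell$ Kikuchi matrix. For even $s$, split $I$ into halves $(I_1,I_2)$ and define $K\in\R^{\binom{[n]}{\ell}\times\binom{[n]}{\ell}}$ by $K_{U,V} = \sum_I w_I$, summing over $I$ with $U\triangle V = I$ and $|U\cap I| = s/2$. Then
\[
(x^{\odot\ell})^\top K\, x^{\odot\ell} = D\cdot f(x),
\]
where $D \approx \binom{n}{\ell}\binom{\ell}{s/2}\binom{n-\ell}{s/2}/\binom{n}{s}$ is the number of $(U,V)$ pairs per $I$. Combined with $\norm{x^{\odot\ell}}^2 = \binom{n}{\ell}$, this gives a degree-$2\ell$ SoS proof that $|f| \leq \binom{n}{\ell}\norm{K}/D$. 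For odd $s$, I would use the standard Cauchy--Schwarz trick on one index: write $f = \sum_i x_i g_i(x)$ and bound $\sum_i g_i^2$, which produces a degree-$(2s-2)$ even instance whose Kikuchi matrix is built from pairs of hyperedges sharing a vertex.

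The main obstacle is the spectral bound $\norm{K} \lesssim \lambda D/\binom{n}{\ell}$ with high probability. I would handle it with matrix Bernstein, or matrix Rademacher after symmetrization, writing $K = \sum_{C\in\cH} \frac1m (A_{C|_S} - \E A)$, where $A_I$ is the $0/1$ matrix of pairs $(U,V)$ with $U\triangle V = I$. Each $A_I$ has row sums at most $\binom{\ell}{s/2}\binom{n}{s/2}$-ish in the relevant direction, and its maximum row degree is roughly $\ell^{s/2}$. Matrix Bernstein then gives
\[
\norm{K} \lesssim \sqrt{\ell\log n}\cdot\frac{\sqrt{\Delta}}{m},
\]
where $\Delta \approx m\,\ell^{s/2}/n^{s/2}$ is the typical row degree; the $\sqrt{\ell\log n}$ comes from the $\log$ of the matrix dimension $\binom{n}{\ell}$. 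Normalizing by $D/\binom{n}{\ell}\approx (\ell/n)^{s/2}$, the requirement reduces to $m \gtrsim (n/\ell)^{s/2}\,\ell\log n/\lambda^2$, and this is at most $m_{k,\ell}\log n/\lambda^2$ for every $s\leq k$.

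Two technical points need care. First, rows of unusually high degree (the regime $\ell \leq c_{k,\lambda} n/\log n$) may have to be pruned, with the effect of pruning bounded separately. Second, the odd case yields the same threshold but involves a less clean diagonal correction. For both I would import the row-pruning and odd-arity analysis of the cited works, then take a union bound over the $2^k$ choices of $S$.
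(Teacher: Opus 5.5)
Your proposal follows essentially the same route as the paper. The paper's proof invokes the Kikuchi-matrix certificate of \cite{ChanDorsiXu26} for each $S$, at level $\ell$ for even $|S|$ and level $\ell-1$ for odd $|S|$ so the degree stays at most $2\ell$; it then observes $m_{s,L}\le O_k(m_{k,\ell})$ for all $s\le k$ and takes a union bound over the $2^k$ choices of $S$. Your explicit matrix-Bernstein treatment of the even case is correct and more detailed than what the paper gives; in fact each $A_I$ is a partial permutation matrix, so $\norm{A_I}\le 1$ and no row pruning is needed there. Like the paper, you defer the odd case to the cited works, but note that the stated $1/\lambda^2$ dependence relies on the argument of \cite{ChanDorsiXu26}, since the paper remarks that the route through \cite{BasuHLM26} only gives $1/\lambda^4$.
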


Finally, the two-sided rank-one specialization of the splittability framework of~\cite{AlevJT19} directly implies certifiable expansion.
\begin{lemma}[Splittable hypergraphs are certifiably expanding~\cite{AlevJT19}]
\label{lem:splittablehypergraphexpansion}
Let $\cH$ be a multiset of ordered $k$-tuples whose singleton marginals are uniform on $[n]$. Suppose that $\cH$ is two-sided rank-$1$ splittable along a rooted binary tree with leaves $[k]$: at every internal node, the centered normalized swap operator has norm at most $\rho$. Then $\cH$ is $(2k,(k-1)\rho)$-certifiably expanding. Consequently, for every $\ell\geq k$ and $\lambda\geq(k-1)\rho$, it is $(2\ell,\lambda)$-certifiably expanding.
\end{lemma}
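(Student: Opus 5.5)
The plan is to fix $S\subseteq[k]$ with $\abs{S}\geq 2$ and write $\frac1m\sum_{C\in\cH}x_{C|_S}=\E_{C\sim\cH}[x_{C|_S}]$. I will then telescope from the joint distribution of $C|_S$ to the fully product distribution over the leaves of $S$, replacing one internal node of the splitting tree at a time. At each step I certify the error in degree $2k$ using the swap-operator norm bound. Since the uniform singleton marginals give exactly $\left(\frac1n\ip{x,1^n}\right)^{\abs{S}}$ at the end, this yields the claim with total error $(k-1)\rho$.

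\emph{Step 1: restrict the tree to $S$.} Take the rooted binary tree on leaves $[k]$ and keep only the leaves in $S$. I then contract internal nodes that have only one child containing leaves of $S$. The result is a binary tree on $S$ with $\abs{S}-1\leq k-1$ internal nodes. Each surviving internal node $v$ has children blocks $L'\subseteq L$ and $R'\subseteq R$, where $(L,R)$ is the split at the corresponding original node. The swap operator for $(L',R')$ is the original swap operator composed on each side with marginalization (averaging) operators. These are contractions in the appropriate weighted $\ell_2$ spaces and preserve the constant function, so the centered part still has norm at most $\rho$. I expect this to need a short but careful check of the normalizations.

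\emph{Step 2: the hybrid argument.} I order the internal nodes bottom-up as $v_1,\dots,v_{\abs{S}-1}$. The hybrid $\mu_t$ is the distribution of $C|_S$ in which the blocks below the first $t$ processed nodes are sampled independently from their $\cH$-marginals. Thus $\mu_0=\cH|_S$, and $\mu_{\abs{S}-1}$ is the product of the singleton marginals, which is uniform on $[n]^S$; its expectation of $x_{C|_S}$ is exactly $\left(\frac1n\ip{x,1^n}\right)^{\abs{S}}$. Consider the step from $\mu_{t-1}$ to $\mu_t$ at a node with children blocks $L',R'$. Here $L'\cup R'$ is one block of $\mu_{t-1}$ and $O$ denotes the remaining blocks, which are sampled independently. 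The difference of the two expectations is
\[
\E_{I\sim\mu^{O}}\Big[x_I\cdot v_{L'}(x)^{\top}\, D_{L'}^{1/2}\,\overline{W}\,D_{R'}^{1/2}\, v_{R'}(x)\Big]\mcom
\]
where $v_{L'}(x)=(x_J)_{J\in[n]^{L'}}$, the matrices $D$ are the diagonal marginal weights, and $\overline W$ is the centered normalized swap operator, with $\norm{\overline W}\leq\rho$.

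\emph{Step 3: the certificate for one step.} For each fixed $I$, set $a=D_{L'}^{1/2}(x_I v_{L'})$ and $b=D_{R'}^{1/2}v_{R'}$. The SoS inequality $\pm a^\top\overline W b\leq \frac{\rho}{2}(\norm{a}^2+\norm{b}^2)$ holds because $\rho I\pm\begin{psmallmatrix}0&\overline W\\ \overline W^\top&0\end{psmallmatrix}\succeq0$. It has degree $2\max(\abs{I}+\abs{L'},\abs{R'})\leq 2k$. Using $x_i^2=1$, we get $\norm{a}^2=\sum_J D_{L'}(J)\,x_I^2x_J^2=1$, and likewise $\norm{b}^2=1$. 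Averaging over $I$ therefore bounds each step by $\rho$ in absolute value, with a degree-$2k$ certificate.

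\emph{Conclusion and main obstacle.} Summing the $\abs{S}-1\leq k-1$ steps gives the degree-$2k$ certificate with $\lambda=(k-1)\rho$. The ``consequently'' clause is immediate: a degree-$2k$ certificate is also a degree-$2\ell$ certificate for $\ell\geq k$, and the inequality only weakens as $\lambda$ grows. I chose the hybrid route instead of inducting on the tree, because induction would need to control $\pE$ of products of error terms, which the definition does not bound pointwise. The step I expect to be the main obstacle is Step 1: verifying that the normalization of the swap operator in~\cite{AlevJT19} matches the weighted bilinear form above, and that it survives marginalization to sub-blocks.
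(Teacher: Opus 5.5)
Your proof is correct and is essentially the paper's argument. Both use a telescoping hybrid down the splitting tree. Each of the $\abs{S}-1$ nodes at which both children meet $S$ costs at most $\rho$ via the PSD (expander-mixing) certificate, Booleanity normalizes the test vectors, and uniform singleton marginals give $\left(\frac1n\ip{x,1^n}\right)^{\abs{S}}$ at the end.

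Two small comparisons with the paper:
\begin{itemize}
\item Your Step 1 is equivalent to the paper's device of setting the coordinates outside $S$ to $1$. The paper then applies the \emph{original} node's swap operator to test vectors that depend only on coordinates in $S$. Your check does go through: lifting from $[n]^{R'}$ to $[n]^{R}$ is an isometry and conditioning from $[n]^{L}$ to $[n]^{L'}$ is a contraction, and both preserve means.
\item Your Step 3 folds $x_I$ into $a$ for each fixed $I$ and then averages over $I$. This is the precise form of the paper's remark that the other block averages are certifiably bounded by $1$.
\end{itemize}

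One correction is needed. The nodes must be processed top-down (root first, or any order in which a node comes after its parent), not bottom-up. Your Step 2 formula presupposes that $L'\cup R'$ is a single block of $\mu_{t-1}$, independent of the remaining blocks $O$. With a bottom-up order this fails at the very first step. Under $\mu_0=\cH|_S$, the block of a lowest node is still correlated with the rest of $S$, so the hybrid $\mu_1$ is not even well defined as you describe it. Consequently the difference does not factor as $\E_I[x_I\cdot(\cdots)]$, and the swap-operator bound at that node does not apply. With the top-down order, everything you wrote goes through verbatim.
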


\subsection{Two-sided spectral expanders are certifiably expanding: proof of \cref{lem:spectralexpansion}}
This is a warm up for our extension to random hypergraphs and splittable hypergraphs. It follows by standard spectral graph theory by observing that the all-$1$ vector is a trivial eigenvector.
\begin{proof}
For $S=[2]$, let $M_G$ be the normalized adjacency matrix of $G$. The two-sided spectral assumption gives
\[
     \|M_G-\frac{1}{n}1^n(1^n)^\top \|_{sp}  \leq \lambda\,.
\]
Consequently, for any degree-$2$ pseudo-expectation $\pE$, Booleanity gives
\begin{equation*}
    \pE\left[
    \lambda\pm\left(
        \E_{(i,j)\sim E(G)}x_ix_j
        -\left(\frac{1}{n}\ip{x,1^n}\right)^2
    \right)
    \right]
    =
    \frac{1}{n}\pE\left[
        x^\top\left(\lambda I\pm\left(M_G-\frac{1}{n}1^n(1^n)^\top\right)\right)x
    \right]
    \geq0.
\end{equation*}
The inequality follows from positivity since the quadratic form inside $\pE$ is a sum of squares.
\end{proof}

\subsection{Random hypergraphs are certifiably expanding: proof of \cref{lem:randomhypergraphexpansion}}

\begin{proof}
Fix $S \subseteq [k]$ and write $s = \abs{S}$. The polynomial appearing in \cref{def:certifiableexpansion} is the Boolean monomial polynomial $\Psi_S$ of~\cite[Definition~3.7]{ChanDorsiXu26}. Its concentration is shown in~\cite[Lemma~5.5]{ChanDorsiXu26}; more formally, the proof of~\cite[Lemma~4.19, Section~5.6.3]{ChanDorsiXu26}, specialized to the Boolean domain, gives the required SoS certificate.

Apply that proof at level $L = \ell$ when $s$ is even and at level $L = \ell-1$ when $s$ is odd. The resulting certificate has degree at most $2\ell$. Moreover, since $s \leq k$ and $\ell \leq n$, we have $m_{s,L} \leq O_k(m_{k,\ell})$, so the assumed density suffices simultaneously for every $S$. A union bound over the at most $2^k$ choices of $S$ completes the proof.
\end{proof}

\begin{remark*}
An analogous proof can be obtained from~\cite[Lemma~5.5]{BasuHLM26}. It gives $(2\ell,\lambda+O(n^{-1/2}))$-certifiable expansion provided
\(
    m \geq \frac{2^{O(k)}}{\lambda^4}\,m_{k,\ell}\log n\,.
\)
Thus, after rescaling the accuracy parameter, for $\lambda\gtrsim_k n^{-1/2}$ it yields the same certifiable-expansion conclusion, but with a $1/\lambda^4$ dependence in place of the $1/\lambda^2$ dependence above.
\end{remark*}

\subsection{Splittable hypergraphs are certifiably expanding: proof of \cref{lem:splittablehypergraphexpansion}}

\begin{proof}
Fix a degree-$2k$ pseudo-expectation $\pE$. For illustration, suppose that $k$ is even and the root separates the first $k/2$ coordinates from the last $k/2$ coordinates; the same argument uses the two actual child blocks for an arbitrary root split. The standard positive-semidefinite proof of the expander mixing lemma and positivity of $\pE$ give
\begin{equation*}
    \Abs{
        \pE\left[
            \E_{C\sim\cH}x_C
            -
            \left(\E_{C\sim\cH}x_{C\vert_{\{1,\ldots,k/2\}}}\right)
            \left(\E_{C\sim\cH}x_{C\vert_{\{k/2+1,\ldots,k\}}}\right)
        \right]
    }
    \leq\rho,
\end{equation*}

Repeat this process down the splitting tree. Since $x_i^2=1$ also certifies within degree $2k$ that every product of the other block averages has absolute value at most $1$, each step changes its pseudo-expectation by at most $\rho$. After the $k-1$ splits, we reach the complete $k$-partite complex, whose average is, by uniformity of the singleton marginals,
\(
    \left(n^{-1}\ip{x,1^n}\right)^k.
\)

For general $A\subseteq[k]$, set the coordinates outside $A$ to $1$ and use the same tree. A split contributes only when both child subtrees meet $A$, and there are exactly $\abs{A}-1$ such nodes when $A\neq\emptyset$. Therefore, for $A\neq\emptyset$,
\begin{equation*}
    \Abs{
        \pE\left[
            \E_{C\sim\cH}x_{C\vert_A}
            -\left(n^{-1}\ip{x,1^n}\right)^{\abs{A}}
        \right]
    }
    \leq(\abs{A}-1)\rho
    \leq(k-1)\rho.
\end{equation*}
For $A=\emptyset$ the expression is identically zero. This proves the desired lemma.
\end{proof}

\section{A Strong Contamination Model for Random CSPs}
\label{sec:semirandommodels}
In this section, we introduce another semirandom model for random CSPs, inspired by the strong contamination model in robust statistics~\cite{DiakonikolasKKLMS16,KothariS17a,KothariS17b,DiakonikolasK19}. We then use \cref{thm:hypergraph-gcr-main} to give an algorithm to recover a high-value assignment for a CSP drawn from this model. Below, we formally introduce the model, and then state the guarantees of our algorithm.

\begin{model}[Strong contamination model]
\label{model:cspcontaminationmodel}
Let $n$ be the number of variables and $m$ be the number of constraints in the CSP.  Let $\delta \geq 0$ be a parameter. We generate a CSP $\Psi$ with predicate $P$ as follows. First, let $\Phi$ be a $k$-CSP drawn from \cref{model:csprandomhypergraph} with $n$ variables and $m$ constraints. Then, we allow an (unbounded) adversary to be given access to $\Phi$, and the adversary may replace an arbitrary set of $\leq \delta m$ constraints in $\Phi$ with new constraints to produce a new CSP $\Psi$.
\end{model}
Since the CSP $\Phi$ from \cref{model:csprandomhypergraph} already has adversarially chosen literal negations, one can equivalently view $\Psi$ as being chosen via the following three step process: \begin{inparaenum}[(1)] \item sample the hypergraph $\cH$ uniformly at random with $m$ hyperedges, \item an adversary chooses $\delta m$ hyperedges to remove from $\cH$, resulting in the hypergraph $\cH'$, \item an adversary chooses a hypergraph $\cH''$ of size $\delta m$, along with literal negations for each hyperedge in $\cH' \cup \cH''$, which defines the CSP $\Psi$\end{inparaenum}. 

Using \cref{thm:hypergraph-gcr-main}, we give an algorithm that recovers an assignment of value $\val(\Psi) - 2 \delta - o(1)$, when $\Psi$ is chosen from \cref{model:cspcontaminationmodel}.
\begin{theorem}
\label{thm:algcontaminationmodel}
There is a randomized algorithm $\cA$ that takes as input a ``runtime/accuracy'' parameter $\ell$, and a $k$-CSP instance $\Psi$ with $n$ variables and $m$ constraints, and in $n^{O_k(\ell)}$-time outputs a real number $\alpha \in [0,1]$ and an assignment $\hat{x} \in \Bits^n$ with the following guarantee:
\begin{enumerate}[(1)]
\item For every instance $\Psi$, $\val(\Psi) \leq \alpha$ with probability $1$ over the randomness of $\cA$;
\item If $m \geq \omega(m_{k,\ell})$, where $m_{k,\ell} \defeq \left(\frac{n}{\ell} \right)^{\frac{k}{2}} \ell$ and $\Psi$ is drawn from \cref{model:cspcontaminationmodel} with parameter $\delta$, then with high probability over (the hypergraph of the initial instance $\Phi$ of) $\Psi$, it holds that $\Psi(\hat{x}) \geq \alpha - 1/\sqrt{\ell} - 2 \delta - o(1) \geq \val(\Psi) - 1/\sqrt{\ell} - 2 \delta - o(1)$ with high probability over the randomness of $\cA$. In particular, if $\ell = \omega(1)$, then $\Psi(\hat{x}) \geq \val(\Psi) - 2 \delta - o(1)$.
\end{enumerate}
\end{theorem}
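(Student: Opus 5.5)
We run the same algorithm as in \cref{mthm:main} on the contaminated instance $\Psi$. Concretely, we compute via \cref{sosfkp} a degree-$O(\ell + k/\lambda)$ pseudo-expectation $\pE$ maximizing $\pE[\Psi]$, and output $\alpha \defeq \pE[\Psi]$ (plus the negligible $2^{-n}$ slack). We then run \cref{alg:hypergraph-gcr-main} on $\pE$ to obtain $\hat{x}$. Item (1) is immediate, since $\alpha$ is an SoS upper bound on $\val(\Psi)$ for every instance. For item (2), the difficulty is that \cref{thm:hypergraph-gcr-main} cannot be applied directly. The hypergraph of $\Psi$ is $\cH' \cup \cH''$, and $\cH''$ is adversarial, so it need not be certifiably expanding. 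We will instead run the analysis of \cref{lem:expected-rounding-err} on the uncorrupted part only.

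Write $\Psi = (1-\delta)\Psi_{\mathrm{good}} + \delta \Psi_{\mathrm{bad}}$ (assuming exactly $\delta m$ replacements for notational ease). Here $\Psi_{\mathrm{good}}$ is the normalized instance polynomial on the surviving constraints $\cH'$, and $\Psi_{\mathrm{bad}}$ is the one on $\cH''$. Both take values in $[0,1]$, both under $\pE$ (by Booleanity/positivity, since predicates are $[0,1]$-valued and hence SoS-certified in $[0,1]$) and under any product distribution.

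The key observation is that the rounding error bound in \cref{lem:expected-rounding-err} only uses expansion through averages of the form $\frac{1}{m}\sum_{C \in \cH} q(C)$ with $q \geq 0$, namely squared discrepancies. Since $\cH' \subseteq \cH$ with $|\cH \setminus \cH'| \leq \delta m$, we have
\[ \frac{1}{m}\sum_{C\in\cH'} \left(\pE'[x_{C\vert_S}] - \mu_{C\vert_S}\right)^2 \leq \frac{1}{m}\sum_{C\in\cH}\left(\pE'[x_{C\vert_S}] - \mu_{C\vert_S}\right)^2 . \]
The right-hand side is controlled exactly as in \cref{lem:expected-rounding-err}, using that $\cH$ is $(O(\ell),\lambda)$-certifiably expanding w.h.p.\ by \cref{lem:randomhypergraphexpansion}, together with \cref{claim:product-pe} and \cref{lem:low-gc-k}. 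Applying Cauchy--Schwarz over $\cH'$ (normalized by $m$ rather than $|\cH'|$) then gives
\[ \E_{T,\mathbf{x}_T}\Abs{(1-\delta)\left(\pE'[\Psi_{\mathrm{good}}] - \mu(\Psi_{\mathrm{good}})\right)} \leq O(\sqrt{\lambda})\norm{\hat{P}}_1 . \]
The bad part is handled trivially: $\delta\,\mu(\Psi_{\mathrm{bad}}) \geq 0 \geq \delta\,\pE'[\Psi_{\mathrm{bad}}] - \delta$, so it loses at most $\delta$. This yields a rounding loss of $O(\sqrt{\lambda})\norm{\hat{P}}_1 + \delta$, which is within the claimed $2\delta$. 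The constant $2$ leaves slack, for example if one prefers to compare to $\Phi$ and pay $\delta$ twice. Conditioning preserves $\pE[\Psi]$ in expectation, and the method of conditional expectations derandomizes the rounding exactly as in the proof of \cref{thm:hypergraph-gcr-main}. Hence $\Psi(\hat{x}) \geq \alpha - O(\sqrt{\lambda})\norm{\hat P}_1 - \delta$.

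Finally, we choose parameters as in \cref{cor:random-hypergraph-gcr}. Take $\lambda$ with $O(\sqrt{\lambda})\norm{\hat P}_1 \leq 1/\sqrt{\ell} + o(1)$; since $m = \omega(m_{k,\ell})$, \cref{lem:randomhypergraphexpansion} applies with $\lambda = o(1)$ up to the logarithmic and $2^{O(k)}$ factors absorbed as in \cref{mthm:main}. The running time is $n^{O_k(\ell)}$.

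The main obstacle is the subset step. We must check that every use of expansion in \cref{lem:expected-rounding-err} really passes through a nonnegative average over $\cH$. The comparisons to the complete hypergraph $\cK$ appear only after upper-bounding the sum of squares by the full-$\cH$ sum, so the monotonicity under removing edges is legitimate. We must also keep the normalization by $m$ (not $|\cH'|$) consistent, so that no $1/(1-\delta)$ blowup appears.
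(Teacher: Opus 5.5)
Your proof is correct, but it takes a different and slightly sharper route than the paper.

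\textbf{The paper's route.} The paper never reopens the rounding analysis. It compares $\Psi$ to the uncontaminated instance $\Phi$ by writing $\tilde\Psi = \tilde\Phi - \tilde\Phi_0 + \tilde\Psi_0$. Dropping $\pE[\tilde\Phi_0] \geq 0$ and using $\pE[\tilde\Psi_0] \leq \delta m$ gives $\pE[\Phi] \geq \pE[\Psi] - \delta$. It then applies \cref{thm:hypergraph-gcr-main} to $\Phi$, relying on the observation that the randomized rounding depends only on $\pE$. Finally it transfers back via $\tilde\Psi(\hat x) \geq \tilde\Phi(\hat x) - \tilde\Phi_0(\hat x)$, which pays $\delta$ a second time. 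This is the source of the $2\delta$ in the statement.

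\textbf{Your route.} You decompose $\Psi$ directly into the surviving constraints on $\cH' \subseteq \cH$ and the inserted constraints on $\cH''$. You then use the fact that the Cauchy--Schwarz error term in \cref{lem:expected-rounding-err} is a nonnegative sum over hyperedges normalized by $m$. Such a sum can only shrink when edges are deleted from the expanding $\cH$, so the surviving part rounds with no extra loss. The inserted part costs at most $\delta$, and only in one direction, because $\mu(\Psi_{\mathrm{bad}}) \geq 0$ and $\pE[\Psi_{\mathrm{bad}}] \leq 1$.

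\textbf{What each buys.} The paper's argument is more modular, since it uses \cref{thm:hypergraph-gcr-main} essentially as a black box. Yours requires opening the proof of \cref{lem:expected-rounding-err}, but it gives loss $\delta$ instead of $2\delta$. It also makes explicit that the derandomization is done with respect to the known input $\Psi$, not the unknown $\Phi$. The paper's write-up leaves this implicit when it says it recovers $\hat x$ with $\Phi(\hat x) \geq \pE[\Phi] - \ldots$.

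Your final parameter-setting step is as informal as the paper's own appeal to \cref{mthm:main}. Keeping the running time $n^{O_k(\ell)}$ forces $\lambda \approx 1/\ell$, and this interacts with the $1/\lambda^2$ density factor in \cref{lem:randomhypergraphexpansion}. So this is not a gap relative to the paper.
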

The key observation used in the proof of \cref{thm:algcontaminationmodel} is that the randomized version of the global correlation rounding procedure used in \cref{thm:hypergraph-gcr-main} (see~\cref{lem:expected-rounding-err}) depends only on the pseudo-expectation $\pE$ and is otherwise \emph{independent} of the actual CSP instance. Furthermore, the derandomization done in~\cref{thm:hypergraph-gcr-main} does not rely on any properties of the hypergraph, and thus can be implemented on an arbitrary CSP as long as the expected objective value in the randomized procedure is sufficiently large. As a consequence, we can write $\Psi$ as $\Phi + (\Psi - \Phi)$, and argue that we round to a ``good enough'' assignment provided that $\pE[\Phi]$ is large, since $\Phi$ is drawn from \cref{model:csprandomhypergraph}. We note that this trick can be used to extend \cref{thm:algcontaminationmodel} to the model where the adversary is only permitted to delete $\leq \delta m$ constraints from $\Phi$ (and cannot add adversarially chosen constraints).

Below, we prove \cref{thm:algcontaminationmodel}. 

\begin{proof}
Let $\tilde{\Psi}$ be the \emph{unnormalized} instance polynomial for the CSP $\Psi$, i.e., $\tilde{\Psi} = m \Psi$, where $\Psi$ is the instance polynomial (\cref{def:instancepoly}) and $m$ is the number of constraints in $\Psi$. Similarly, let $\tilde{\Phi}$ be the unnormalized instance polynomial for the CSP $\Phi$. We let $\pE$ be an arbitrary degree-$\ell$ pseudoexpectation maximizing $\pE[\Psi(x)]$.

We can write $\tilde{\Psi} = \tilde{\Phi} - \tilde{\Phi}_0 + \tilde{\Psi}_0$ where $\Phi$ is drawn from \cref{model:csprandomhypergraph}, $\Phi_0$ are the constraints removed from $\Phi$ by the adversary, and $\Psi_0$ are the constraints added by the adversary. By linearity, it follows that $m\pE[\Psi] = \pE[\tilde{\Psi}] = \pE[\tilde{\Phi}] - \pE[ \tilde{\Phi}_0] + \pE[ \tilde{\Psi}_0] \leq \pE[ \tilde{\Phi}] - 0 + \delta m$. Hence, $\pE[\Phi] \geq \pE[\Psi] - \delta $. Applying \cref{thm:hypergraph-gcr-main}, it follows that with high probability we recover an assignment $\hat{x}$ where $\Phi(\hat{x}) \geq \pE[\Phi] - (1/\sqrt{\ell} + o(1))$.

We thus have that 
\begin{flalign*}
\tilde{\Psi}(\hat{x}) \geq \tilde{\Phi}(\hat{x}) - \tilde{\Phi}_0(\hat{x}) \geq m\pE[\Phi] - m(1/\sqrt{\ell} + o(1)) - \delta m \geq m\pE[\Psi]  - m(1/\sqrt{\ell} + o(1)) - 2\delta m \mper
\end{flalign*}
As $\pE[\Psi] = \alpha$, this finishes the proof.
\end{proof}

\section*{Acknowledgments}
We thank Avi Wigderson for helpful discussions and for encouraging us to write this paper. We also thank Madhur Tulsiani and Sidhanth Mohanty for helpful discussions.

\bibliographystyle{alpha}
\bibliography{references}

\end{document}